\def\bra#1{{\left\langle #1 \right|}}
\def\ket#1{{\left| #1 \right\rangle}}
\definecolor{amber}{rgb}{1.0, 0.49, 0.0}
\newcommand{\bs}[1]{\boldsymbol{#1}}
\newtheorem{theorem}{Theorem}
\newtheorem{proposition}{Proposition}
\newtheorem*{theorem2}{Theorem}
\begin{document}

\title{Multi-parameter estimation in networked quantum sensors}
\author{Timothy J. Proctor}
\thanks{The first two authors contributed equally to this work. Email address for TJP: tjproct@sandia.gov. Email address for PAK: Paul.Knott@nottingham.ac.uk.}
\affiliation{Sandia National Laboratories, Livermore, CA 94550, USA}
\affiliation{Department of Chemistry, University of California, Berkeley, CA 94720, USA}
\author{Paul A. Knott}
\thanks{The first two authors contributed equally to this work. Email address for TJP: tjproct@sandia.gov. Email address for PAK: Paul.Knott@nottingham.ac.uk.}
\affiliation{Centre for the Mathematics and Theoretical Physics of Quantum Non-Equilibrium Systems (CQNE), School of Mathematical Sciences, University of Nottingham, University Park, Nottingham NG7 2RD, UK}
\affiliation{Department of Physics and Astronomy, University of Sussex, Brighton BN1 9QH, UK}
\author{Jacob A. Dunningham}
\affiliation{Department of Physics and Astronomy, University of Sussex, Brighton BN1 9QH, UK}
\date{\today}

%================Abstract===================%
\begin{abstract}
We introduce a general model for a network of quantum sensors, and we use this model to consider the question: When can entanglement between the sensors, and/or global measurements, enhance the precision with which the network can measure a set of unknown parameters? We rigorously answer this question by presenting precise theorems proving that for a broad class of problems there is, at most, a very limited intrinsic advantage to using entangled states or global measurements. Moreover, for many estimation problems separable states and local measurements are optimal, and can achieve the ultimate quantum limit on the estimation uncertainty. This immediately implies that there are broad conditions under which simultaneous estimation of multiple parameters cannot outperform individual, independent estimations. Our results apply to any situation in which spatially localized sensors are unitarily encoded with independent parameters, such as when estimating multiple linear or non-linear optical phase shifts in quantum imaging, or when mapping out the spatial profile of an unknown magnetic field. We conclude by showing that entangling the sensors \emph{can} enhance the estimation precision when the parameters of interest are global properties of the entire network.
\end{abstract}
\maketitle
% TOTAL WORD COUNT 3742.

% Words: 231 
Quantum networks are central to a growing number of quantum information technologies, including quantum computation \cite{kimble2008quantum,nickerson2013topological} and cryptography \cite{Sasaki2011field,wang2013direct}. Many important metrology problems can be framed in terms of networks, including mapping magnetic fields \cite{steinert2010high,hall2012high,pham2011magnetic,seo2007fourier,baumgratz2016quantum}, phase imaging \cite{humphreys2013quantum,liu2014quantum,yue2014quantum,ciampini2015quantum,knott2016local,gagatsos2016gaussian,zhang2017quantum} and global frequency standards \cite{komar2014quantum}. However, there is no general consensus on whether entanglement within a network of sensors can enhance the precision to which the network can measure a set of unknown parameters: entanglement provides significant enhancements in some cases \cite{komar2014quantum,eldredge2016optimal} but not others \cite{knott2016local,kok2017role}. Given the immense challenges faced in the creation and manipulation of entangled states, developing a complete understanding of when such resources are advantageous for multi-parameter estimation is of paramount importance.

In this letter we introduce and analyze a general model that encompasses a wide range of those quantum multi-parameter estimation (MPE) problems that might naturally be termed a ``quantum sensing network'' (QSN). Our QSN model (Fig.~\ref{fig:quantum-sensing-networks}) includes any situation in which spatially or temporally localized sensors are encoded with independent parameters. Hence, our results have direct implications for multi-mode linear \cite{humphreys2013quantum,liu2014quantum,yue2014quantum,ciampini2015quantum,zhang2017quantum,knott2016local,gagatsos2016gaussian} or non-linear \cite{liu2014quantum} optical phase shift estimation for quantum imaging, mapping unknown spatially or temporally changing fields \cite{steinert2010high,hall2012high,pham2011magnetic,seo2007fourier,baumgratz2016quantum}, estimating many-qubit Hamiltonians \cite{eldredge2016optimal}, and networks comprised of clocks \cite{komar2014quantum}, BECs \cite{pyrkov2013entanglement},  interferometers \cite{knott2016local}, or hybrid elements \cite{wallquist2009hybrid}. Beyond these examples, any situation in which independent parameters are unitarily imprinted on different quantum subsystems fits into our model.

% Word equivalent of figure:  62
% Words in caption: 46
\begin{figure}
\includegraphics[width=7cm]{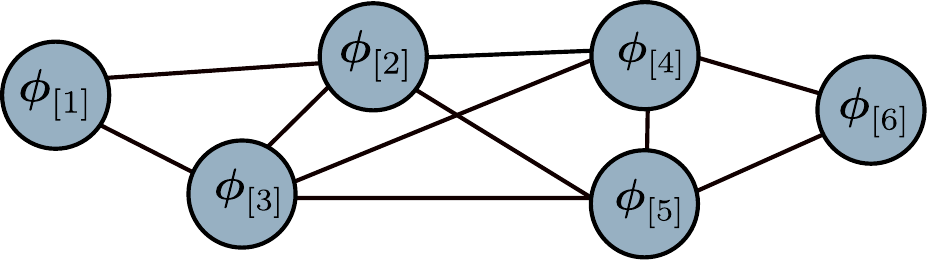}
\caption{A network of quantum sensors. The $k^{\rm th}$ node represents a ``sensor'' into which the vector parameter $\bs{\phi}_{[k]}$ is encoded via a local unitary evolution. The connections between the nodes denote that, in general, the sensors can be entangled, and/or global measurements can be performed.}
\label{fig:quantum-sensing-networks}
\end{figure}

% Words: 176
Using our model we show that, if the generators of all of the unknown parameters commute, no fundamental precision enhancement can be achieved by entangling the sensors or by performing global measurements. In this case, states that are separable between the sensors -- which are often easier to prepare experimentally -- can achieve the ultimate quantum limit.  We then look at the case of non-commuting parameter generators; here we demonstrate that entanglement between sensors can \emph{at most} enhance the estimation precision by a factor of two. We conclude by showing that entangling the sensors \emph{can} significantly enhance the precision when estimating global parameters, such as the average of all the unknown parameters in the network \cite{komar2014quantum}.

Whenever a protocol employs entangled resources it is fundamentally indivisible into separate, independent estimations at each location: it is intrinsically a \emph{simultaneous} \cite{humphreys2013quantum,liu2014quantum,yue2014quantum,ciampini2015quantum,knott2016local,gagatsos2016gaussian,zhang2017quantum,yue2014quantum,baumgratz2016quantum,szczykulska2016multi} estimation method. As such, our results directly imply that there are broad conditions under which simultaneous estimation cannot outperform a strategy that estimates each parameter individually, conclusively proving that enhancements from simultaneous estimation \cite{humphreys2013quantum,liu2014quantum,yue2014quantum,ciampini2015quantum,zhang2017quantum,yue2014quantum,baumgratz2016quantum,szczykulska2016multi} are not generic.

% Words 307
\vspace{0.1cm}
\noindent{\bf Multi-parameter estimation (MPE) --} Consider a quantum system with Hilbert space $\mathcal{H}$, and let $\mathscr{D}(\mathcal{H})$ and $\mathscr{M}(\mathcal{H})$ denote the space of density operators and positive-operator valued measures (POVMs) on $\mathcal{H}$, respectively. We will use the standard framework for a quantum metrology protocol \cite{giovannetti2011advances,zwierz2010general,demkowicz2012elusive}: An experimenter picks some $\rho \in \mathscr{D}(\mathcal{H})$ and $\mathcal{M} \in \mathscr{M}(\mathcal{H})$ and implements $\mu$ repeats of: i) prepare $\rho$; ii) let $\rho$ evolve to $\rho_{\bs{\phi}} = U_{\bs{\phi}} \rho U_{\bs{\phi}}^{\dagger}$ where $U_{\bs{\phi}}$ is a unitary that depends on $d$ unknown parameters $\boldsymbol{\phi} = (\phi_1, \phi_2, \dots, \phi_d)^T$; iii) apply the measurement  $\mathcal{M}$ to $\rho_{\bs{\phi}}$. An estimate of $\bs{\phi}$ is then calculated from experimental outcomes using an estimator $\bs{\Phi}$.

A common measure of the estimation uncertainty is the covariance matrix $\text{Cov}(\bs{\Phi}) = \mathbb{E}[\left(\bs{\Phi} -  \mathbb{E}[ \bs{\Phi} ]\right)\left(\bs{\Phi} -  \mathbb{E}[\bs{\Phi}]\right)^T]$, where $\mathbb E[\cdot]$ is the expected value. For any unbiased estimator, the \emph{quantum Cram\'er-Rao bound} (QCRB) states that $\text{Cov}(\bs{\Phi}) \geq (\mathcal{F} \mu)^{-1}$ \cite{fujiwara1995quantum,matsumoto2002new,helstrom1976quantum,paris2009quantum,helstrom1967minimum}, where $\mathcal{F}$ is the \emph{quantum Fisher information matrix} (QFIM) for $\rho_{\bs{\phi}}$, defined by $\mathcal{F}_{kl} := \text{Tr} [ \rho_{\bs{\phi}} \hat{L}_{k} \hat{L}_{l} +  \rho_{\bs{\phi}} \hat{L}_{l} \hat{L}_{k}] / 2$ with $\hat{L}_{k}$ solving $\partial \rho_{\bs{\phi}} / \partial \phi_{k}  = (\rho_{\bs{\phi}} \hat{L}_{k} + \hat{L}_{k} \rho_{\bs{\phi}}) / 2$ \cite{fujiwara1995quantum,matsumoto2002new,helstrom1976quantum,paris2009quantum,helstrom1967minimum}. 
Note that for matrices $A$ and $B$, $A \geq B$ denotes that $A - B$ is positive semi-definite. For $d = 1$ and any $\rho_{\bs \phi}$ there is always a measurement and an estimator that saturate the QCRB as $\mu \to \infty$ \cite{braunstein1994statistical,paris2009quantum,demkowicz2014quantum}, but for $d > 1$ this is not generally true \cite{fujiwara1995quantum,vidrighin2014joint,helstrom1976quantum,fujiwara2001estimation,ragy2015resources,ragy2016compatibility,pezze2017optimal}. Some elements of $\bs{\phi}$ may be of more interest than others, so we introduce a $d\times d$ diagonal \emph{weighting matrix}, $W$, with $W \geq 0$, and define the scalar quantity $E_{\bs{\Phi}} := \text{Tr}(W\text{Cov}(\bs{\Phi}))$ \cite{genoni2013optimal,vaneph2013quantum,fujiwara1995quantum}. Throughout this letter, $E_{\bs{\Phi}}$ is the figure of merit to minimize. The QCRB implies that $E_{\bs{\Phi}} \geq  \frac{1}{\mu} \sum_k W_{kk} [\mathcal{F}^{-1}]_{kk}$.

% Words: 885
\vspace{0.1cm}
\noindent{\bf Quantum sensing networks --} In this letter we consider a particular class of quantum MPE problems: \emph{quantum sensing networks} (QSNs). A QSN is, by definition, any estimation problem in which we have $s$ quantum systems, which we will call ``quantum sensors'', and there are $d$ unknown parameters with each parameter unitarily encoded into \emph{one and only one} of the sensors.  It is natural to refer to this model as a QSN because any set of spatially distributed quantum systems that are each ``sensing'' some locally unitarily encoded parameters is a QSN (although some systems without this spatial structure also fit into this framework). 

Our model, illustrated in Fig.~\ref{fig:quantum-sensing-networks}, encompasses many metrology problems in the literature \cite{humphreys2013quantum,liu2014quantum,yue2014quantum,ciampini2015quantum,zhang2017quantum,knott2016local,gagatsos2016gaussian,steinert2010high,hall2012high,pham2011magnetic,seo2007fourier,baumgratz2016quantum,eldredge2016optimal, komar2014quantum,pyrkov2013entanglement,wallquist2009hybrid} (see examples later). More formally, a QSN is any MPE problem in which the total Hilbert space $\mathcal{H}$ may be decomposed as $\mathcal{H} = \mathcal{H}_1  \otimes \cdots \otimes \mathcal{H}_{s}$ for some $\{ \mathcal{H}_k \}$, and the unitary evolution may be decomposed as $U_{\bs{\phi}} = U_1(\bs{\phi}_{[1]}) \otimes U_2(\bs{\phi}_{[2]}) \otimes \cdots \otimes U_s(\bs{\phi}_{[s]})$, where $\bs{\phi}_{[k]}$ denotes the $d_k$-dimensional sub-vector of $\bs{\phi}$ encoded onto the $k^{\text{th}}$ sensor by the unitary $U_k$, with $\sum_k d_k = d$. Let $\bs{\phi}_{[1]}=(\phi_1, \dots,\phi_{d_1})^T$, $\bs{\phi}_{[2]}=(\phi_{d_1+1}, \dots,\phi_{d_1+d_2})^T$, etc.

Often we wish to compare probe states $\rho$ that contain the same quantity of ``resources'' $R(\rho)$, for some $R : \mathscr{D}(\mathcal{H}) \to \mathbb{R}_{\geq 0}$. There is no universally applicable definition for the resources within a state; we will consider functions of the form $R(\rho) = \text{Tr}[( \hat{R}_1 + \hat{R}_2 + \cdots + \hat{R}_{s} )(\rho)]$,
where $\hat{R}_k$ is any Hermitian operator acting non-trivially only on sensor $k$ and satisfying $R(\rho_{\bs{\phi}}) = R(\rho)$ (so resources are conserved under the evolution). This includes the resource counting in most standard metrology problems. E.g., in optical metrology with $s$ modes the total average number of photons is the standard resource \cite{humphreys2013quantum,liu2014quantum,yue2014quantum,ciampini2015quantum,knott2016local,gagatsos2016gaussian}, given by $\hat{R}_k = \hat{n}_k$, where $\hat{n}_k$ is the number operator on mode $k$ (which commutes with the standard parameter generator, $\hat{n}_k$). In atomic sensing, the resource is normally the total number of atoms \cite{mpe_prl_note1,huelga1997improvement,tanaka2014robust,kessler2014quantum}. This is obtained by taking the Hilbert space of each sensor to be the direct sum of the $n$-atoms Hilbert space for $n=0,1,2,\dots$, and $\hat{R}_k$ to be the atom-counting operator, which commutes with all atom-number conserving Hamiltonians.

\vspace{0.1cm}
\noindent {\bf QSNs with commuting parameter generators --} The \emph{generator} of $\phi_k$ is defined by $\hat{H}_k := -i (\partial U_{\bs{\phi}}^{\dagger} / \partial_{\phi_k}) U_{\bs{\phi}}$ \cite{liu2015quantum,liu2014fidelity}. Our main results are separated into two cases: when  the generators all commute, and when they do not. First, consider any QSN in which the generators all commute. Informally, our first result is that for \emph{any} such estimation problem sensor-separable states can enable an estimation uncertainty that is at least as small as can be achieved with sensor-entangled states. This also implies that, in this setting, simultaneous estimation provides no intrinsic advantage over individual estimation; the latter can achieve the ultimate quantum limit. We now state this precisely:

\begin{theorem}
Consider any QSN in which $[\hat{H}_k,\hat{H}_l] = 0$ for all $k,l$ and where we wish to minimize $E_{\bs{\Phi}}$ where $E_{\bs{\Phi}} = \rm{Tr}( W\rm{Cov}( \bs{\Phi} ) )$ for some specified $W$. For any estimator, probe $\rho$ and measurement $\mathcal{M}_{\rho}$, there exists an estimator, a probe $\varphi$ and a measurement $\mathcal{M}_{\varphi}$ for which 
\begin{enumerate}
\item $\varphi$ is separable between sensors.
\item $R(\varphi) \leq R(\rho)$.
\item $\mathcal{M}_{\varphi}$ is implementable by independent measurements of each sensor.
\item $E_{\bs{\Phi}}(\varphi,\mathcal{M}_{\varphi}) \leq E_{\bs{\Phi}}(\rho,\mathcal{M}_{\rho})$ in the asymptotic $\mu$ limit.
\end{enumerate}
\end{theorem}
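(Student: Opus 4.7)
The plan is to construct, for any input $(\rho,\mathcal{M}_\rho)$, a sensor-separable \emph{pure} product state $\varphi=\bigotimes_k\varphi_k$ together with a product POVM $\mathcal{M}_\varphi=\bigotimes_k\mathcal{M}_{\varphi_k}$ whose asymptotic uncertainty matches or beats that of $(\rho,\mathcal{M}_\rho)$. Because all generators commute, and because $\hat{R}_k$ is conserved by $U_k(\bs{\phi}_{[k]})$, we have $[\hat{R}_k,\hat{H}_j]=0$ for every $j$ with $\phi_j$ encoded on sensor $k$. I would simultaneously diagonalize $\hat{R}_k$ and $\{\hat{H}_j\}_{j\in[k]}$ in a basis $\{\ket{a}_k\}$, let $p_k(a)=\braket{a|\rho_k|a}$ where $\rho_k=\text{Tr}_{\bar k}\rho$, and define $\ket{\varphi_k}:=\sum_a\sqrt{p_k(a)}\ket{a}_k$. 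Property (1) is then immediate, and property (2) follows from $\braket{\varphi_k|\hat{R}_k|\varphi_k}=\sum_a p_k(a)\braket{a|\hat{R}_k|a}=\text{Tr}(\hat{R}_k\rho_k)$ summed over $k$.

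For the uncertainty comparison I would use two ingredients: (i) the operator inequality $\mathcal{F}(\sigma)\leq 4C(\sigma)$, where $C_{jj'}(\sigma)=\tfrac{1}{2}\text{Tr}\bigl(\sigma\{\hat{H}_j-\langle\hat{H}_j\rangle,\hat{H}_{j'}-\langle\hat{H}_{j'}\rangle\}\bigr)$, with equality for pure $\sigma$ (a matrix Braunstein--Caves bound, provable via a convex-roof decomposition); and (ii) when the generators commute, $C(\sigma)$ depends only on the diagonal distribution $\braket{a|\sigma|a}$ in the common eigenbasis. The sensor-$k$ principal sub-block $\mathcal{F}^{(k)}(\rho)$ is thus bounded above by $4C_k(\rho_k)$, while purity of $\varphi_k$ combined with our matching of distributions gives $\mathcal{F}(\varphi_k)=4C_k(\varphi_k)=4C_k(\rho_k)$. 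Hence $\mathcal{F}(\varphi_k)\geq\mathcal{F}^{(k)}(\rho)$ as matrices. Because $\varphi$ is a product state, $\mathcal{F}(\varphi)$ is block-diagonal with blocks $\mathcal{F}(\varphi_k)$; combining this with the Schur-complement inequality $[\mathcal{F}(\rho)^{-1}]_{jj}\geq[(\mathcal{F}^{(k)}(\rho))^{-1}]_{jj}$ for $j\in[k]$ yields
\begin{equation}
\sum_j W_{jj}\bigl[\mathcal{F}(\varphi)^{-1}\bigr]_{jj}\leq\sum_j W_{jj}\bigl[\mathcal{F}(\rho)^{-1}\bigr]_{jj}.
\end{equation}

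It remains to exhibit a product POVM attaining the QCRB for $\varphi$ in the large-$\mu$ limit, establishing properties (3) and (4). The SLDs of $\varphi_k$ take the pure-state form $\hat{L}_j=2i[\ket{\varphi_k}\bra{\varphi_k},\hat{H}_j]$, and a short cyclic-trace manipulation reduces $\text{Tr}(\varphi_k[\hat{L}_j,\hat{L}_{j'}])$ to a multiple of $\braket{\varphi_k|[\hat{H}_j,\hat{H}_{j'}]|\varphi_k}$, which vanishes by hypothesis. The resulting weak-compatibility condition ensures that some POVM on $\mathcal{H}_k$ asymptotically attains the QCRB for $\varphi_k$; the product of these with a per-sensor efficient estimator (e.g.\ maximum likelihood) then realises $E_{\bs{\Phi}}(\varphi,\mathcal{M}_\varphi)=\mu^{-1}\sum_j W_{jj}[\mathcal{F}(\varphi)^{-1}]_{jj}$ asymptotically. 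Combined with the QCRB lower bound $E_{\bs{\Phi}}(\rho,\mathcal{M}_\rho)\geq\mu^{-1}\sum_j W_{jj}[\mathcal{F}(\rho)^{-1}]_{jj}$, this delivers property (4).

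The hardest part is this final measurement step: weak compatibility on each sensor is straightforward to verify, but producing a single POVM per sensor that asymptotically saturates \emph{every} weighted diagonal of $\mathcal{F}(\varphi_k)^{-1}$ simultaneously (rather than one POVM per parameter) requires invoking the asymptotic attainability theorems for multi-parameter estimation with commuting SLDs, or constructing a two-stage adaptive measurement. The QFIM-comparison half of the argument is essentially routine linear algebra once $\varphi_k$ is chosen correctly; the conceptual crux is the observation that simultaneous diagonalizability of $\{\hat{R}_k\}\cup\{\hat{H}_j\}_{j\in[k]}$ allows a single pure state to match both the generator covariance matrix and the resource expectation of $\rho_k$.
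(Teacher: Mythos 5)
Your proposal is correct and follows the same overall strategy as the paper: the state $\ket{\varphi_k}=\sum_a\sqrt{p_k(a)}\ket{a}_k$ is exactly the paper's $\bigotimes_k\sum_{\lambda_k}\|\langle\psi|\lambda_k\rangle\|\ket{\lambda_k}$, the key linear-algebra step is the same Schur-complement inequality $[\mathcal{F}^{-1}]_{[kk]}\geq[\mathcal{F}_{[kk]}]^{-1}$ (the paper's Proposition~1), and the measurement step is handled the same way, by invoking Matsumoto-type asymptotic attainability for pure states with vanishing mean commutators and running the per-sensor POVMs in parallel (the paper's Proposition~2). The one genuine divergence is your treatment of mixed probes: the paper first purifies $\rho$ into $\mathcal{H}\otimes\mathcal{H}_{\mathbb{A}}$, notes $\mathcal{F}(\Psi_\rho)\geq\mathcal{F}(\rho)$, and then applies the pure-state construction to the purification (Proposition~3), whereas you compare $\rho$ to $\varphi$ directly via the matrix bound $\mathcal{F}(\sigma)\leq 4C(\sigma)$ together with the observation that, for commuting generators, $C$ depends only on the diagonal of $\sigma$ in the common eigenbasis. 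Your route is a clean one-shot argument that dispenses with the purification step and yields the slightly stronger conclusion $\mathcal{F}(\varphi_k)\geq\mathcal{F}^{(k)}(\rho)$ as a matrix inequality (and $R(\varphi)=R(\rho)$ with equality); its only cost is that you must justify the covariance upper bound, which does follow from convexity of the QFIM plus positivity of the covariance of the pure-state means, as you indicate. A small point in your favour on rigor: you explicitly choose the basis to simultaneously diagonalize $\hat{R}_k$ together with the generators, which is needed for the resource-matching claim when the generators have degenerate joint eigenspaces; the paper's argument implicitly assumes this choice.
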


\begin{proof}
This may be proven by constructing such a $\varphi$ and $\mathcal{M}_{\varphi}$, for arbitrary $\rho$ and $\mathcal{M}_{\rho}$. First consider pure $\rho$, i.e., $\rho = \psi = \ket{\psi}\bra{\psi}$. We now find a mapping from $\psi$ to a state $\varphi$ that satisfies conditions 1 and 2, and that has an equal or smaller QCRB on $E_{\bs{\Phi}}$. Consider the state $\ket{\varphi} = \bigotimes_{k=1}^s (\sum_{\lambda_k} \|\langle \psi | \lambda_k \rangle\| \ket{\lambda_k} )$, where $\{\ket{\lambda_k}\}$ is a set of orthonormal mutual eigenstates of the generators for all of the parameters encoded into sensor $k$. By construction, $\psi$ and $\varphi$ have the same statistics for any operator that is diagonal in the eigenbasis of the generators, and $\varphi$ is separable between sensors. As the resource operator commutes with $U_{\bs{\phi}}$, it commutes with the parameter generators, implying $\varphi$ satisfies conditions 1 and 2.

For a pure state and commuting generators $\mathcal{F}_{kl} = 4(\langle \hat{H}_k\hat{H}_l \rangle - \langle \hat{H}_k \rangle \langle \hat{H}_l  \rangle)$ \cite{knott2016local,baumgratz2016quantum,yue2014quantum}. Using this we find that $\psi$ and $\varphi$ have the same block-diagonal QFIM elements, where the block diagonals are the sub-QFIMs for each $\bs{\phi}_{[k]}$, denoted $\mathcal{F}_{[kk]}$, and $\varphi$ has a block-diagonal QFIM ($\psi$ in general does not). Now for any QFIM $[\mathcal{F}^{-1}]_{[kk]} \geq [\mathcal{F}_{[kk]}]^{-1}$, with saturation only for a block-diagonal QFIM (see the appendix), and hence the diagonal elements of the inverse QFIM of $\varphi$ are all smaller than or equal to those of $\psi$. Using $E_{\bs{\Phi}} \geq  \frac{1}{\mu} \sum_l W_{ll} [\mathcal{F}^{-1}]_{ll}$, and noting that when the generators commute there always exists a measurement and estimator that asymptotically saturate the QCRB \cite{matsumoto2002new}, we see that condition 4 is satisfied by some measurement and estimator. It only remains to show that for one such measurement condition 3 holds, and for every mixed state $\rho$, there exists a pure state with equal or lower $E_{\bs{\Phi}}$ and the same resources. We prove this in the appendix.
\end{proof}

% Words: 374
Theorem 1 has practical implications for a range of important estimation problems. For example, consider estimating a set of $d$ optical phases encoded into $d$ modes (defined with respect to a classical phase reference \cite{jarzyna2012quantum}). Theorem 1 implies that, for any mode-entangled state and measurement, there is a mode-separable state and measurement (acting on only that mode and a local phase reference) that provides an equal or lower estimation uncertainty, for the same average number of photons through the $d$ phase shifts. So, although highly mode-entangled states can provide high estimation precision \cite{humphreys2013quantum,liu2014quantum,yue2014quantum,zhang2017quantum}, this entanglement is not necessary. This supersedes the results of Ref.~\cite{knott2016local}, which apply only to mode-symmetric states.

Importantly, Theorem 1 is only directly applicable when the set of states, from which we wish to find the best $\rho$, is the set of all density operators on $\mathcal{H}=\mathcal{H}_1\otimes\dots \otimes\mathcal{H}_s$. Hence, if we restrict the allowed $\rho$ to $\mathbb{S} \subset \mathscr{D}(\mathcal{H})$, Theorem 1 is only applicable if $\mathbb{S}$ contains all $\rho$ on some smaller Hilbert space $\mathcal{H}'$ that still factorizes. This is not the case for some global constraints on the state. This reconciles our theorem with Humphreys \emph{et al.}~\cite{humphreys2013quantum}, who show that highly-entangled ``generalized NOON states'' provide a precision enhancement over individual estimation strategies, for the $d$-optical-phases problem, when only states with definite total photon number are considered. %Hence, although highly entangled $d$-mode states can provide enhancements over certain unentangled states \cite{humphreys2013quantum,liu2014quantum,yue2014quantum,zhang2017quantum}, entanglement and simultaneous estimation is not necessary to obtain such high-precision . 

%(\note{merge this into the above, as some things are repeated}).
 %As note above, our theorem implies that this enhancement disappears when the, arguably physically poorly-motivated, constraint of fixed particle number is removed. 

Interestingly, Theorem 1 may be extended to further classes of $\mathbb{S}$. This includes any $\mathbb{S}$ containing pure states whereby every state in $\mathbb{S}$ can be mapped to a sensor-separable state in $\mathbb{S}$ with the same measurement statistics for operators diagonal in the eigenbasis of the generators (the proof is a trivial adaption of that given above). This implies that, if considering only Gaussian optical states in the $d$-phases problem, entanglement cannot reduce the estimation uncertainty. As such, our theorem strengthens and complements the results of Ref.~\cite{gagatsos2016gaussian}.

Theorem 1 may also be applied to other important metrology scenarios: It implies that the estimation of non-linear optical phase shifts on many modes \cite{liu2014quantum} does not benefit from mode-entanglement, and in a network of clocks \cite{komar2014quantum}, if each clock is used for local timekeeping then entangling the clocks will not enhance the precision. A magnetic field sensing problem is considered later. %Finally, note that Theorem 1 implies that QSN estimation problems with commuting generators are \emph{compatible}~\cite{ragy2016compatibility} (\note{I think we can drop this sentance}). 

% Words: 430
\vspace{0.1cm}
\noindent
{\bf QSNs with non-commuting parameter generators --} There are a variety of important estimation problems for which the generators do \emph{not} commute \cite{zhuang2017entanglement,baumgratz2016quantum,ballester2004estimation}, such as estimating the three spatial components of a magnetic field \cite{baumgratz2016quantum}, or estimating completely unknown unitaries \cite{ballester2004estimation}. We now adapt Theorem 1 to the case of non-commuting parameter generators. 

Consider an arbitrary QSN with some non-commuting parameter generators. In our model, the generators of parameters imprinted on different sensors always commute, so only the generators of parameters encoded into the same sensor can be non-commuting. When estimating parameters with non-commuting generators, it is known that the optimal estimation protocol will generally require a probe that is entangled with an ancilla \cite{ballester2004estimation,zhuang2017entanglement}. In a QSN, other sensors in the network can potentially play a similar role to ancillas, and so sensor-entanglement might reduce estimation uncertainty. However, any enhancement in the estimation precision gained from entanglement between sensors can instead be obtained by entangling each sensor with a local ancilla. The cost of this is that resources can be consumed by the ancillary system; twice the resources might be required to obtain the same estimation precision without sensor-entanglement. We can state this precisely in the following theorem:

\begin{theorem}
Consider any QSN in which we wish to minimize $E_{\bs{\Phi}}$. For any estimator, probe state $\rho \in \mathscr{D}(\mathcal{H})$ and measurement $\mathcal{M}_{\rho} \in \mathscr{M}(\mathcal{H})$, there exists an estimator, probe $\varphi \in \mathscr{D}(\mathcal{H} \otimes \mathcal{H})$ and measurement $\mathcal{M}_{\varphi} \in \mathscr{M}(\mathcal{H} \otimes \mathcal{H})$ for which 
\begin{enumerate}
\item $\varphi$ is separable between sensors, but each sensors can be entangled with a local ancilla.
\item $R(\varphi) \leq 2R(\rho)$.
 \item $\mathcal{M}_{\varphi}$ is implementable by independent measurements of each sensor.
\item $E_{\bs{\Phi}}(\varphi, \mathcal{M}_{\varphi}) \leq E_{\bs{\Phi}}(\rho, \mathcal{M}_{\rho})$ in the asymptotic $\mu$ limit.
\end{enumerate}
\end{theorem}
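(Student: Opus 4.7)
The plan is to mimic the Theorem~1 construction but replace the sensor-separable surrogate $\varphi$ by a state that is still separable \emph{across sensors} while allowing each sensor to be entangled with a \emph{local} ancilla of the same type. Given any $\rho$ on $\mathcal{H}=\mathcal{H}_1\otimes\cdots\otimes\mathcal{H}_s$, set $\rho_k := \text{Tr}_{\neq k}[\rho]$ and let $\varphi_k \in \mathscr{D}(\mathcal{H}_k\otimes\mathcal{H}_k^{\text{anc}})$ be a purification of $\rho_k$ with $\mathcal{H}_k^{\text{anc}}\cong \mathcal{H}_k$. Define $\varphi := \bigotimes_k \varphi_k$, which lives on $\mathcal{H}\otimes\mathcal{H}$ (after a harmless reordering of factors), is pure, and is separable across sensors with each sensor entangled only with its own ancilla; this immediately delivers condition~1.

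For condition~4, the key step is a QFIM comparison. Since $\rho$ is itself an extension of $\rho_k$ (with environment $\mathcal{H}_{\neq k}$) whereas $\varphi_k$ is a purification of $\rho_k$, the standard fact that all purifications share the same QFIM and upper-bound (in the PSD sense) the QFIM of every extension yields $\mathcal{F}_{\varphi_k} \geq \mathcal{F}_{\rho,[kk]}$. Because $\varphi$ is a tensor product across sensors, its full QFIM is block diagonal with blocks $\mathcal{F}_{\varphi_k}$, so $[\mathcal{F}_\varphi^{-1}]_{[kk]} = \mathcal{F}_{\varphi_k}^{-1} \leq \mathcal{F}_{\rho,[kk]}^{-1} \leq [\mathcal{F}_\rho^{-1}]_{[kk]}$, where the last inequality is the block-matrix lemma proved for Theorem~1. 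Summing with the weights $W_{ll}$ gives $\sum_l W_{ll}[\mathcal{F}_\varphi^{-1}]_{ll} \leq \sum_l W_{ll}[\mathcal{F}_\rho^{-1}]_{ll}$, i.e.\ the QCRB on $E_{\bs\Phi}$ for $\varphi$ is no larger than that for $\rho$. Conditions~3 and 4 then follow jointly because each $\varphi_k$ is \emph{pure}: for a pure state the Holevo and SLD bounds coincide and are asymptotically attainable by POVMs collective over the $\mu$ copies of sensor~$k$ and its ancilla---hence local to that sensor---and taking $\mathcal{M}_\varphi$ to be the product of these local measurements saturates the QCRB for $\varphi$ as $\mu\to\infty$.

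Condition~2 requires a careful choice of purification. I would take the Schmidt form $\ket{\varphi_k} = \sum_i \sqrt{p_i}\ket{i}_k\otimes \ket{i}_{\text{anc}}$ for $\rho_k = \sum_i p_i \ket{i}\bra{i}$, and define the ancilla resource operator $\hat{R}_k^{\text{anc}}$ to be $\hat{R}_k$ pulled back through the canonical isomorphism $\mathcal{H}_k^{\text{anc}}\cong \mathcal{H}_k$. The ancilla reduced state $\text{Tr}_k[\varphi_k]$ then has exactly the same matrix as $\rho_k$ in this basis, giving $\text{Tr}[\hat{R}_k^{\text{anc}}\varphi_k] = \text{Tr}[\hat{R}_k\rho_k]$ and hence $R(\varphi) = 2R(\rho)$. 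Extension from pure to mixed $\rho$ requires no additional work, since the construction only uses the reduced states $\rho_k$ and the purification upper bound on the QFIM holds for arbitrary $\rho$. The main obstacle, in my view, is the saturation half of condition~4 in the non-commuting-generator regime paired with condition~3: the multi-parameter QCRB is not saturable in general, and saturation must be achievable via measurements that do not couple different sensors. The purity of each $\varphi_k$ is the crucial ingredient that unlocks asymptotic attainability within each sensor-ancilla block, and it is precisely the requirement that the ancilla purify the local marginal that costs up to a factor of two in resources, accounting for condition~2.
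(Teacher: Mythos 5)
Your construction of $\varphi$ (purify each marginal $\rho_k=\mathrm{Tr}_{\neq k}[\rho]$ into a local ancilla and tensor the results) is exactly the paper's, your resource accounting matches its part (iii), and your QFIM chain $[\mathcal{F}_\varphi^{-1}]_{[kk]}=\mathcal{F}_{\varphi_k}^{-1}\leq\mathcal{F}_{\rho,[kk]}^{-1}\leq[\mathcal{F}_\rho^{-1}]_{[kk]}$ is a valid (slightly more direct) route to the paper's part (i): the QCRB on $E_{\bs{\Phi}}$ for $\varphi$ is no larger than that for $\rho$. (The phrase ``all purifications share the same QFIM'' is only true here because the encoding is unitary on the system alone, so the pure-state QFIM formula depends only on $\rho_k$; worth saying explicitly.) The genuine gap is in how you pass from this bound comparison to conditions 3 and 4. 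You assert that ``for a pure state the Holevo and SLD bounds coincide and are asymptotically attainable,'' so that a product of local measurements saturates the QCRB for $\varphi$. This is false precisely in the regime Theorem 2 addresses: for a pure state whose within-sensor generators do not commute, the SLD Cram\'er--Rao bound is asymptotically attainable only under the weak-commutativity condition $\langle\psi|[\hat{L}_j,\hat{L}_k]|\psi\rangle=0$, and the Holevo bound is in general strictly larger than $\mathrm{Tr}(W\mathcal{F}^{-1})$ even for pure models (e.g., estimating all three components of a field on a qubit with an ancilla). Since neither $\varphi$ nor $\rho$ need attain its QCRB, showing $\mathrm{QCRB}(\varphi)\leq\mathrm{QCRB}(\rho)$ does not establish $E_{\bs{\Phi}}(\varphi,\mathcal{M}_\varphi)\leq E_{\bs{\Phi}}(\rho,\mathcal{M}_\rho)$ for any realizable $\mathcal{M}_\varphi$. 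You correctly flag this as ``the main obstacle,'' but then resolve it with a claim (purity unlocks attainability) that does not hold.

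The paper closes this gap by an operational simulation argument rather than by saturating the QCRB, and you would need something of this kind. It shows that the $\bs{\phi}_{[l]}$-encoded state $\ket{\varphi^l}$ can be mapped to the globally purified $\ket{\psi_{\rho}^l}$ using only $\bs{\phi}_{[l]}$-independent unitaries and partial traces, so every POVM available to $\psi_\rho$ (hence to $\rho$) for estimating $\bs{\phi}_{[l]}$ with the other blocks known is reproducible on $\varphi_l$ alone, with the same outcome statistics. Separability of $\varphi$ across sensor--ancilla pairs then allows all of these single-block-optimal local measurements to be run in parallel, whereas the possibly entangled $\rho$ cannot in general achieve all single-block optima simultaneously. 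This delivers conditions 3 and 4 without ever claiming the QCRB is attained, which is the step your argument is missing.
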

A complete proof is provided in the appendix (it closely follows the proof of Theorem 1). Note that condition 2 in this theorem depends on how resources used in ancillary sensors are counted, and here we have counted resources in the ancillas and sensors equally. If ancillas are considered cost-free then condition 2 improves to $R(\varphi) \leq R(\rho)$. Whether entanglement with a local ancilla is practically plausible is application dependent. %, but we note that ancillary devices are often used in experiments (e.g., optical reference beams, etc), and creating this entanglement is likely easier than entangling the sensors.
Theorem 2 can be applied to a range of practical QSN problems. For example, if we wish to characterize a multi-dimensional field at multiple locations, then entanglement between atomic sensors at these locations can provide no improvement in precision compared to entangling these atoms with some local ancillary system (which may contribute to total resources used). This complements the results of Ref.~\cite{baumgratz2016quantum}, which provides strategies for single-site estimation of multi-dimensional fields.

% Words: 893 + 16 (1 eq) = 909
\vspace{0.1cm}
\noindent
{\bf Estimating global functions of $\bs{\phi}$ --}  In some sensing problems it may not be necessary to estimate $\bs{\phi}$. Instead, the parameter(s) of interest could be some function(s) of $\bs{\phi}$, e.g., $\sum_k\phi_k$. In this case, the aim is to optimize the QSN for estimating these functions, and this encompasses many important problems, including measuring: phase differences in one \cite{aasi2013enhanced} or more \cite{knott2016local} interferometers; the average or sum of many parameters \cite{komar2014quantum}; a linear gradient \cite{zhang2014fitting,ng2014quantum}. A \emph{global} property of the network is some
vector (or scalar) with elements that are functions of $\{\phi_k\}$ depending non-trivially on many or all of the $\phi_k$, which includes the examples given above. We now show that the optimal protocol for estimating global properties of a QSN often requires sensor-entangled states.

For simplicity, we consider estimating a single linear function of $\bs{\phi}$; $\theta = \bs{v}^T \bs{\phi}$ for some $\bs{v} \in \mathbb{R}^d$. To fix arbitrary constants, let $\|\bs{v}\|_2 = 1$ and $v_k \geq 0$ $\forall k$ ($\| \bs{v} \|_{p} := [\sum_k |v_k|^p]^{1/p}$). Moreover, consider a QSN consisting of $\leq N$ particles (e.g., atoms or photons) distributed over $d$ sensors, with $\phi_{k}$ encoded into sensor $k$. We take the parameter generators to all be identical (except that they act on different sensors), with the maximal and minimal eigenvalues of the generator for $\leq n$ particles in a sensor, $\lambda_{\max,n}$ and $\lambda_{\min,n}$, satisfying $\lambda_{\max,n} - \lambda_{\min,n} = \kappa n$ for some constant $\kappa >0$. Denote corresponding orthonormal eigenvectors by $\ket{\lambda_{\max,n}}$ and $\ket{\lambda_{\min,n}}$. Examples that fit into this setting include estimating a function of many linear optical phase shifts, or of a spatially varying 1-dimensional magnetic field with multi-level atoms, or qubits \cite{eldredge2016optimal}.

Although we only wish to estimate $\theta$, there are many unknown parameters. Hence, to bound $\text{Var}(\Theta) = \mathbb{E}[\Theta^2] -  \mathbb{E}[\Theta]^2$ ($\Theta$ is the estimate of $\theta$) requires the QCRB on $\bs{\theta} = (\theta, \theta_2, \dots)^T = M\bs{\phi}$ for some matrix $M$ with $(M\bs{\phi})_1 = \theta$. We may take $M$ to be orthogonal, as only the first row of $M$ is specified by the problem. The relevant QFIM is then $\mathcal{F}(\bs{\theta}) = M \mathcal{F}(\bs{\phi}) M^T$ \cite{paris2009quantum}.

The optimal $n$-particle state of sensor $k$ for estimating $\phi_k$ is $\propto  \ket{\lambda_{\text{min},n}} +\ket{\lambda_{\max,n}}$, so the optimal $N$-particle QSN sensor-separable state for estimating $\theta$ is $ \propto \left(  \ket{\lambda_{\text{min},w_k}} +\ket{\lambda_{\max,w_k}} \right)^{\otimes d}$ optimized over $\bs{w} \in \mathbb{N}^d$ with $\|\bs{w}\|_1 = N$. By calculating the QFIM of this $\bs{w}$-optimized state, for any pure and sensor-separable state we have
 $\text{Var}(\Theta)  \geq   \| \bs{v} \|_{2/3}^{2}/(\mu \kappa^2 N^2)  \geq  \| \bs{v} \|_{1}^{3}/(\mu \kappa^2N^2) $, where $\mu$ is the number of experimental repeats. Now, assuming that $v_k / \|\bs{v}\|_1$ is rational and that $N$ is such that $\tilde{v}_{k} \equiv Nv_k/\|\bs{v}\|_1$ is an integer $\forall k$, consider the sensor-entangled GHZ-like state 
\begin{equation}
\ket{\psi_{\textsc{ghz},\bs{v}}} =\frac{1}{\sqrt{2}} \left( \ket{\lambda_{\text{max},\tilde{v}_{k}}}^{\otimes d}+   \ket{\lambda_{\text{min},\tilde{v}_{k}}}^{\otimes d} \right).
\end{equation}
 The QFIM for this state is $\mathcal{F}(\bs{\phi}) = \kappa^2N^2 \bs{v}\bs{v}^T / \|\bs{v}\|_1^2$, and hence $\mathcal{F}(\bs{\theta})_{11} = \kappa^2N^2 / \|\bs{v}\|_1^2$ with all other matrix elements zero. This QFIM is singular, but the state depends on $\theta$, so the saturable QCRB for this state is given by $\text{Var} (\Theta)  \geq 1 / (\mu\mathcal{F}(\bs{\theta})_{11}) = \|\bs{v}\|_1^2 / (\mu\kappa^2N^2)$. 

As  $\|\bs{v}\|_2=1$, for \emph{all} non-trivial $\bs{v}$ (i.e., $\bs{v}$ with multiple non-zero elements) $\|\bs{v}\|_1 >1$. Hence, for all such $\bs{v}$ entanglement between sensors reduces the estimation uncertainty below what is obtainable with \emph{any} sensor-separable state.  Moreover, $\|\bs{v}\|_1$ is maximal when $\bs{v} \propto (1,1,\dots,1)$, and so the precision enhancement is largest when estimating the average or sum of all $d$ parameters. In this setting, the reduction in the estimation variance is a factor of $1/d$ (as then $\| \bs{v} \|_1^2 / \| \bs{v}\|_{2/3}^2 = 1 / d$). 

To illustrate these results, we now apply them to a simple -- but practically relevant -- example: estimating the difference between the magnetic field strength at two locations with $N$ qubits (i.e., gradient estimation). Consider estimating $\theta = (\phi_2 - \phi_1)/\sqrt{2}$ with $\phi_k$ for $k=1,2$ generated by $\hat{J}_{z,k} = \frac{1}{2}\sum_j\sigma_{z,k,j}$ on sensor $k$, which consists of $n_k$ qubits for $n_1+n_2 = N$, where $\sigma_{z,k,j}$ is the $\sigma_{z}$ operator on qubit $j$ in sensor $k$. Our results imply that a global GHZ-like state $ \propto \ket{\downarrow}^{n_1}\ket{\uparrow}^{n_2} + \ket{\uparrow}^{n_1}\ket{\downarrow}^{n_2}$ with $n_1=n_2=N/2$ has an uncertainty reduction of $1/2$ compared to any sensor-separable state. However, if we instead wish to estimate $\phi_1$ and $\phi_2$ (or $\phi_2 - \phi_1$ and $\phi_2 + \phi_1$), then the above state is not appropriate, as it is sensitive only to $\phi_2 - \phi_1$. In this case, Theorem 1 implies that the optimal probe state is separable between the atoms at the two sites (the optimal state is then a local GHZ-like state at each site). Importantly, note that these conclusions do not necessarily hold if $\phi_1$ and $\phi_2$ have a known dependence: the extreme case is when we know that $\phi_1=\phi_2$, in which case estimating $\phi \equiv \phi_1 = \phi_2$ is a well-known one-parameter problem, and a global GHZ is optimal \cite{bollinger1996optimal,huelga1997improvement}. This example can be directly adapted to $l$-level atoms, $>2$ sensors, and more general linear functions.

%Moreover, in this case our GHZ-like state is a standard GHZ state (for qubits with each generator the collective $z$-spin operator on that sensor, it is $ \propto \ket{\uparrow}^{N} + \ket{\downarrow}^{N}$, where $\sigma_z \ket{\uparrow} = + 1\ket{\uparrow}$, $\sigma_z \ket{\downarrow} = - 1\ket{\downarrow}$).  (\note{Take linear gradient estimation with atoms, and apply above result, then theorem 1 if we also require estimation of the absolute strengths at each site.}) \note{CUT FROM EARLIER: For example, it implies that the optimal estimation of a one-dimensional magnitude field at a range of discrete sites using atomic sensors does not benefit from entanglement between the atoms at different sites (but entanglement between atoms at each site does enhance precision \cite{bollinger1996optimal,huelga1997improvement});} (\note{Finally, note that all our results assume that the unknown $\phi_k$ are independent. In the extreme opposite case of all $\phi_k$ being equal .... .}).

Recently, Ge \emph{et al.}~\cite{ge2017distributed} have applied our results to the estimation of a function of $d$ linear phase shifts, and they have shown how to obtain the $O(d)$ precision enhancement, derived above, by entangling photons using a linear optical network. These interesting results show that the $O(d)$ enhancement proven here is potentially obtainable with current technology.%; it will be important to assess the impacts of photon losses on the promising results of Ref.~\cite{ge2017distributed}.

% Words: 322
\vspace{0.1cm}
\noindent
{\bf Conclusions:} Quantum metrology is a powerful emerging technology, but while many practical problems unavoidably involve more than one unknown parameter, the critical resources for obtaining the ultimate quantum limit in multi-parameter estimation (MPE) are not yet well-understood. In this setting, simultaneous estimation, entanglement between sensors, and global measurements are possible avenues for improving estimation precision that are not relevant in the single-parameter scenario \cite{humphreys2013quantum,liu2014quantum,yue2014quantum,ciampini2015quantum,zhang2017quantum,yue2014quantum,baumgratz2016quantum,szczykulska2016multi}.

In this letter we considered a broad class of practically important MPE problems: \emph{quantum sensing networks}, meaning any setting in which the unknowns parameters can be sub-divided into distinct sets each associated with one spatially or temporally localized sensor. We have presented a general model for such estimation problems, and we stated precise theorems that show that simultaneous estimation, entanglement between sensors, and global measurements are broadly \emph{not} fundamentally useful resources for minimizing estimation uncertainty in this setting. The important exception to this is when one or more \emph{global} properties of the network are the parameters of interest, e.g., if only the average of all the parameters is to be estimated. In this case we have shown that entangled states and measurements can, in general, improve estimation precision. In doing so, we have shown that GHZ-like states have a particularly high precision for estimating generic linear functions in a practically relevant class of QSNs, including in optical and atomic sensing networks.

These results provide a rigorous foundation for understanding the role of entanglement and simultaneous estimation in optimal MPE, and they definitively show that these resources are not critical in a broad class of important problems. We anticipate that this letter will prove helpful for guiding the development of sensing technologies for multi-parameter metrology in fields as diverse as optical imaging \cite{humphreys2013quantum,liu2014quantum,yue2014quantum,ciampini2015quantum,zhang2017quantum,knott2016local,gagatsos2016gaussian}, field mapping with atoms \cite{steinert2010high,hall2012high,pham2011magnetic,seo2007fourier,baumgratz2016quantum}, and sensor networks comprised of BECs \cite{pyrkov2013entanglement}, clocks \cite{komar2014quantum}, or interferometers \cite{knott2016local}. Moreover, recently these results been applied to the interesting problem of estimating functions of linear optical phases \cite{ge2017distributed}.

%(\note{merge into conclusions?})
%(\note{In summary, we have shown that entangled states and measurements can, in general, improve estimation precision when estimating \emph{global} functions of a sensing network. In doing so, we have shown that GHZ-like states have a particularly high precision for estimating generic linear functions in a practically relevant class of QSNs. This is of independent interest, and it highlights that the utility of entanglement depends not only on the physical setting, but also on what parameters are of interest. }) 

%=============Bibliography============
We thank Jes{\'u}s Rubio for helpful discussions. This work was partly funded by the UK EPSRC through the Quantum Technology Hub: Networked Quantum Information Technology (grant reference EP/M013243/1), and the Foundational Questions Institute under the Physics of the Observer Programme (Grant No. FQXi-RFP-1601). Sandia National Laboratories is a multimission laboratory managed and operated by National Technology and Engineering Solutions of Sandia, LLC, a wholly owned subsidiary of Honeywell International, Inc., for the U.S. Department of Energy's National Nuclear Security Administration under contract DE-NA0003525. 

\bibliography{MyLib_Thesis_multiparameter_new}

%merlin.mbs apsrev4-1.bst 2010-07-25 4.21a (PWD, AO, DPC) hacked
%Control: key (0)
%Control: author (0) dotless jnrlst
%Control: editor formatted (1) identically to author
%Control: production of article title (0) allowed
%Control: page (1) range
%Control: year (0) verbatim
%Control: production of eprint (0) enabled
\begin{thebibliography}{61}%
\makeatletter
\providecommand \@ifxundefined [1]{%
 \@ifx{#1\undefined}
}%
\providecommand \@ifnum [1]{%
 \ifnum #1\expandafter \@firstoftwo
 \else \expandafter \@secondoftwo
 \fi
}%
\providecommand \@ifx [1]{%
 \ifx #1\expandafter \@firstoftwo
 \else \expandafter \@secondoftwo
 \fi
}%
\providecommand \natexlab [1]{#1}%
\providecommand \enquote  [1]{``#1''}%
\providecommand \bibnamefont  [1]{#1}%
\providecommand \bibfnamefont [1]{#1}%
\providecommand \citenamefont [1]{#1}%
\providecommand \href@noop [0]{\@secondoftwo}%
\providecommand \href [0]{\begingroup \@sanitize@url \@href}%
\providecommand \@href[1]{\@@startlink{#1}\@@href}%
\providecommand \@@href[1]{\endgroup#1\@@endlink}%
\providecommand \@sanitize@url [0]{\catcode `\\12\catcode `\$12\catcode
  `\&12\catcode `\#12\catcode `\^12\catcode `\_12\catcode `\%12\relax}%
\providecommand \@@startlink[1]{}%
\providecommand \@@endlink[0]{}%
\providecommand \url  [0]{\begingroup\@sanitize@url \@url }%
\providecommand \@url [1]{\endgroup\@href {#1}{\urlprefix }}%
\providecommand \urlprefix  [0]{URL }%
\providecommand \Eprint [0]{\href }%
\providecommand \doibase [0]{http://dx.doi.org/}%
\providecommand \selectlanguage [0]{\@gobble}%
\providecommand \bibinfo  [0]{\@secondoftwo}%
\providecommand \bibfield  [0]{\@secondoftwo}%
\providecommand \translation [1]{[#1]}%
\providecommand \BibitemOpen [0]{}%
\providecommand \bibitemStop [0]{}%
\providecommand \bibitemNoStop [0]{.\EOS\space}%
\providecommand \EOS [0]{\spacefactor3000\relax}%
\providecommand \BibitemShut  [1]{\csname bibitem#1\endcsname}%
\let\auto@bib@innerbib\@empty
%</preamble>
\bibitem [{\citenamefont {Kimble}(2008)}]{kimble2008quantum}%
  \BibitemOpen
  \bibfield  {author} {\bibinfo {author} {\bibfnamefont {H.~J.}\ \bibnamefont
  {Kimble}},\ }\bibfield  {title} {\enquote {\bibinfo {title} {The quantum
  internet},}\ }\href {https://www.nature.com/articles/nature07127} {\bibfield
  {journal} {\bibinfo  {journal} {Nature}\ }\textbf {\bibinfo {volume} {453}},\
  \bibinfo {pages} {1023--1030} (\bibinfo {year} {2008})}\BibitemShut {NoStop}%
\bibitem [{\citenamefont {Nickerson}\ \emph {et~al.}(2013)\citenamefont
  {Nickerson}, \citenamefont {Li},\ and\ \citenamefont
  {Benjamin}}]{nickerson2013topological}%
  \BibitemOpen
  \bibfield  {author} {\bibinfo {author} {\bibfnamefont {N.~H.}\ \bibnamefont
  {Nickerson}}, \bibinfo {author} {\bibfnamefont {Y.}~\bibnamefont {Li}}, \
  and\ \bibinfo {author} {\bibfnamefont {S.~C.}\ \bibnamefont {Benjamin}},\
  }\bibfield  {title} {\enquote {\bibinfo {title} {Topological quantum
  computing with a very noisy network and local error rates approaching one
  percent},}\ }\href {https://www.nature.com/articles/ncomms2773} {\bibfield
  {journal} {\bibinfo  {journal} {Nat. Commun.}\ }\textbf {\bibinfo {volume}
  {4}},\ \bibinfo {pages} {1756} (\bibinfo {year} {2013})}\BibitemShut
  {NoStop}%
\bibitem [{\citenamefont {Sasaki}\ \emph {et~al.}(2011)\citenamefont {Sasaki},
  \citenamefont {Fujiwara}, \citenamefont {Ishizuka}, \citenamefont {Klaus},
  \citenamefont {Wakui}, \citenamefont {Takeoka}, \citenamefont {Miki},
  \citenamefont {Yamashita}, \citenamefont {Wang}, \citenamefont {Tanaka} \emph
  {et~al.}}]{Sasaki2011field}%
  \BibitemOpen
  \bibfield  {author} {\bibinfo {author} {\bibfnamefont {M.}~\bibnamefont
  {Sasaki}}, \bibinfo {author} {\bibfnamefont {M.}~\bibnamefont {Fujiwara}},
  \bibinfo {author} {\bibfnamefont {H.}~\bibnamefont {Ishizuka}}, \bibinfo
  {author} {\bibfnamefont {W.}~\bibnamefont {Klaus}}, \bibinfo {author}
  {\bibfnamefont {K.}~\bibnamefont {Wakui}}, \bibinfo {author} {\bibfnamefont
  {M.}~\bibnamefont {Takeoka}}, \bibinfo {author} {\bibfnamefont
  {S.}~\bibnamefont {Miki}}, \bibinfo {author} {\bibfnamefont {T.}~\bibnamefont
  {Yamashita}}, \bibinfo {author} {\bibfnamefont {Z.}~\bibnamefont {Wang}},
  \bibinfo {author} {\bibfnamefont {A.}~\bibnamefont {Tanaka}},  \emph
  {et~al.},\ }\bibfield  {title} {\enquote {\bibinfo {title} {Field test of
  quantum key distribution in the {T}okyo {QKD} network},}\ }\href
  {https://www.osapublishing.org/oe/abstract.cfm?uri=oe-19-11-10387} {\bibfield
   {journal} {\bibinfo  {journal} {Opt. Express}\ }\textbf {\bibinfo {volume}
  {19}},\ \bibinfo {pages} {10387--10409} (\bibinfo {year} {2011})}\BibitemShut
  {NoStop}%
\bibitem [{\citenamefont {Wang}\ \emph {et~al.}(2013)\citenamefont {Wang},
  \citenamefont {Yang}, \citenamefont {Liao}, \citenamefont {Zhang},
  \citenamefont {Shen}, \citenamefont {Hu}, \citenamefont {Wu}, \citenamefont
  {Yang}, \citenamefont {Jiang}, \citenamefont {Tang} \emph
  {et~al.}}]{wang2013direct}%
  \BibitemOpen
  \bibfield  {author} {\bibinfo {author} {\bibfnamefont {J.-Y.}\ \bibnamefont
  {Wang}}, \bibinfo {author} {\bibfnamefont {B.}~\bibnamefont {Yang}}, \bibinfo
  {author} {\bibfnamefont {S.-K.}\ \bibnamefont {Liao}}, \bibinfo {author}
  {\bibfnamefont {L.}~\bibnamefont {Zhang}}, \bibinfo {author} {\bibfnamefont
  {Q.}~\bibnamefont {Shen}}, \bibinfo {author} {\bibfnamefont {X.-F.}\
  \bibnamefont {Hu}}, \bibinfo {author} {\bibfnamefont {J.-C.}\ \bibnamefont
  {Wu}}, \bibinfo {author} {\bibfnamefont {S.-J.}\ \bibnamefont {Yang}},
  \bibinfo {author} {\bibfnamefont {H.}~\bibnamefont {Jiang}}, \bibinfo
  {author} {\bibfnamefont {Y.-L.}\ \bibnamefont {Tang}},  \emph {et~al.},\
  }\bibfield  {title} {\enquote {\bibinfo {title} {Direct and full-scale
  experimental verifications towards ground-satellite quantum key
  distribution},}\ }\href {https://www.nature.com/articles/nphoton.2013.89}
  {\bibfield  {journal} {\bibinfo  {journal} {Nat. Photon.}\ }\textbf {\bibinfo
  {volume} {7}},\ \bibinfo {pages} {387--393} (\bibinfo {year}
  {2013})}\BibitemShut {NoStop}%
\bibitem [{\citenamefont {Steinert}\ \emph {et~al.}(2010)\citenamefont
  {Steinert}, \citenamefont {Dolde}, \citenamefont {Neumann}, \citenamefont
  {Aird}, \citenamefont {Naydenov}, \citenamefont {Balasubramanian},
  \citenamefont {Jelezko},\ and\ \citenamefont {Wrachtrup}}]{steinert2010high}%
  \BibitemOpen
  \bibfield  {author} {\bibinfo {author} {\bibfnamefont {S.}~\bibnamefont
  {Steinert}}, \bibinfo {author} {\bibfnamefont {F.}~\bibnamefont {Dolde}},
  \bibinfo {author} {\bibfnamefont {P.}~\bibnamefont {Neumann}}, \bibinfo
  {author} {\bibfnamefont {A.}~\bibnamefont {Aird}}, \bibinfo {author}
  {\bibfnamefont {B.}~\bibnamefont {Naydenov}}, \bibinfo {author}
  {\bibfnamefont {G.}~\bibnamefont {Balasubramanian}}, \bibinfo {author}
  {\bibfnamefont {F.}~\bibnamefont {Jelezko}}, \ and\ \bibinfo {author}
  {\bibfnamefont {J.}~\bibnamefont {Wrachtrup}},\ }\bibfield  {title} {\enquote
  {\bibinfo {title} {High sensitivity magnetic imaging using an array of spins
  in diamond},}\ }\href {http://aip.scitation.org/doi/10.1063/1.3385689}
  {\bibfield  {journal} {\bibinfo  {journal} {Rev. Sci. Instrum.}\ }\textbf
  {\bibinfo {volume} {81}},\ \bibinfo {pages} {043705} (\bibinfo {year}
  {2010})}\BibitemShut {NoStop}%
\bibitem [{\citenamefont {Hall}\ \emph {et~al.}(2012)\citenamefont {Hall},
  \citenamefont {Beart}, \citenamefont {Thomas}, \citenamefont {Simpson},
  \citenamefont {McGuinness}, \citenamefont {Cole}, \citenamefont {Manton},
  \citenamefont {Scholten}, \citenamefont {Jelezko}, \citenamefont {Wrachtrup}
  \emph {et~al.}}]{hall2012high}%
  \BibitemOpen
  \bibfield  {author} {\bibinfo {author} {\bibfnamefont {L.~T.}\ \bibnamefont
  {Hall}}, \bibinfo {author} {\bibfnamefont {G.~C.~G.}\ \bibnamefont {Beart}},
  \bibinfo {author} {\bibfnamefont {E.~A.}\ \bibnamefont {Thomas}}, \bibinfo
  {author} {\bibfnamefont {D.~A.}\ \bibnamefont {Simpson}}, \bibinfo {author}
  {\bibfnamefont {L.~P.}\ \bibnamefont {McGuinness}}, \bibinfo {author}
  {\bibfnamefont {J.~H.}\ \bibnamefont {Cole}}, \bibinfo {author}
  {\bibfnamefont {J.~H.}\ \bibnamefont {Manton}}, \bibinfo {author}
  {\bibfnamefont {R.~E.}\ \bibnamefont {Scholten}}, \bibinfo {author}
  {\bibfnamefont {F.}~\bibnamefont {Jelezko}}, \bibinfo {author} {\bibfnamefont
  {J.}~\bibnamefont {Wrachtrup}},  \emph {et~al.},\ }\bibfield  {title}
  {\enquote {\bibinfo {title} {High spatial and temporal resolution wide-field
  imaging of neuron activity using quantum nv-diamond},}\ }\href
  {https://www.nature.com/articles/srep00401} {\bibfield  {journal} {\bibinfo
  {journal} {Sci. Rep.}\ }\textbf {\bibinfo {volume} {2}} (\bibinfo {year}
  {2012})}\BibitemShut {NoStop}%
\bibitem [{\citenamefont {Pham}\ \emph {et~al.}(2011)\citenamefont {Pham},
  \citenamefont {Le~Sage}, \citenamefont {Stanwix}, \citenamefont {Yeung},
  \citenamefont {Glenn}, \citenamefont {Trifonov}, \citenamefont {Cappellaro},
  \citenamefont {Hemmer}, \citenamefont {Lukin}, \citenamefont {Park} \emph
  {et~al.}}]{pham2011magnetic}%
  \BibitemOpen
  \bibfield  {author} {\bibinfo {author} {\bibfnamefont {L.~M.}\ \bibnamefont
  {Pham}}, \bibinfo {author} {\bibfnamefont {D.}~\bibnamefont {Le~Sage}},
  \bibinfo {author} {\bibfnamefont {P.~L.}\ \bibnamefont {Stanwix}}, \bibinfo
  {author} {\bibfnamefont {T.~K.}\ \bibnamefont {Yeung}}, \bibinfo {author}
  {\bibfnamefont {D.}~\bibnamefont {Glenn}}, \bibinfo {author} {\bibfnamefont
  {A.}~\bibnamefont {Trifonov}}, \bibinfo {author} {\bibfnamefont
  {P.}~\bibnamefont {Cappellaro}}, \bibinfo {author} {\bibfnamefont {P.~R.}\
  \bibnamefont {Hemmer}}, \bibinfo {author} {\bibfnamefont {M.~D.}\
  \bibnamefont {Lukin}}, \bibinfo {author} {\bibfnamefont {H.}~\bibnamefont
  {Park}},  \emph {et~al.},\ }\bibfield  {title} {\enquote {\bibinfo {title}
  {Magnetic field imaging with nitrogen-vacancy ensembles},}\ }\href
  {http://iopscience.iop.org/article/10.1088/1367-2630/13/4/045021/meta}
  {\bibfield  {journal} {\bibinfo  {journal} {New J. Phys.}\ }\textbf {\bibinfo
  {volume} {13}},\ \bibinfo {pages} {045021} (\bibinfo {year}
  {2011})}\BibitemShut {NoStop}%
\bibitem [{\citenamefont {Seo}\ \emph {et~al.}(2007)\citenamefont {Seo},
  \citenamefont {Adam}, \citenamefont {Kang}, \citenamefont {Lee},
  \citenamefont {Jeoung}, \citenamefont {Park}, \citenamefont {Planken},\ and\
  \citenamefont {Kim}}]{seo2007fourier}%
  \BibitemOpen
  \bibfield  {author} {\bibinfo {author} {\bibfnamefont {M.~A.}\ \bibnamefont
  {Seo}}, \bibinfo {author} {\bibfnamefont {A.~J.~L.}\ \bibnamefont {Adam}},
  \bibinfo {author} {\bibfnamefont {J.~H.}\ \bibnamefont {Kang}}, \bibinfo
  {author} {\bibfnamefont {J.~W.}\ \bibnamefont {Lee}}, \bibinfo {author}
  {\bibfnamefont {S.~C.}\ \bibnamefont {Jeoung}}, \bibinfo {author}
  {\bibfnamefont {Q.~H.}\ \bibnamefont {Park}}, \bibinfo {author}
  {\bibfnamefont {P.~C.~M.}\ \bibnamefont {Planken}}, \ and\ \bibinfo {author}
  {\bibfnamefont {D.~S.}\ \bibnamefont {Kim}},\ }\bibfield  {title} {\enquote
  {\bibinfo {title} {Fourier-transform terahertz near-field imaging of
  one-dimensional slit arrays: mapping of electric-field-, magnetic-field-, and
  poynting vectors},}\ }\href
  {https://www.osapublishing.org/oe/abstract.cfm?uri=oe-15-19-11781} {\bibfield
   {journal} {\bibinfo  {journal} {Opt. Express}\ }\textbf {\bibinfo {volume}
  {15}},\ \bibinfo {pages} {11781--11789} (\bibinfo {year} {2007})}\BibitemShut
  {NoStop}%
\bibitem [{\citenamefont {Baumgratz}\ and\ \citenamefont
  {Datta}(2016)}]{baumgratz2016quantum}%
  \BibitemOpen
  \bibfield  {author} {\bibinfo {author} {\bibfnamefont {T.}~\bibnamefont
  {Baumgratz}}\ and\ \bibinfo {author} {\bibfnamefont {A.}~\bibnamefont
  {Datta}},\ }\bibfield  {title} {\enquote {\bibinfo {title} {Quantum enhanced
  estimation of a multidimensional field},}\ }\href
  {https://journals.aps.org/prl/abstract/10.1103/PhysRevLett.116.030801}
  {\bibfield  {journal} {\bibinfo  {journal} {Phys. Rev. Lett.}\ }\textbf
  {\bibinfo {volume} {116}},\ \bibinfo {pages} {030801} (\bibinfo {year}
  {2016})}\BibitemShut {NoStop}%
\bibitem [{\citenamefont {Humphreys}\ \emph {et~al.}(2013)\citenamefont
  {Humphreys}, \citenamefont {Barbieri}, \citenamefont {Datta},\ and\
  \citenamefont {Walmsley}}]{humphreys2013quantum}%
  \BibitemOpen
  \bibfield  {author} {\bibinfo {author} {\bibfnamefont {P.~C.}\ \bibnamefont
  {Humphreys}}, \bibinfo {author} {\bibfnamefont {M.}~\bibnamefont {Barbieri}},
  \bibinfo {author} {\bibfnamefont {A.}~\bibnamefont {Datta}}, \ and\ \bibinfo
  {author} {\bibfnamefont {I.~A.}\ \bibnamefont {Walmsley}},\ }\bibfield
  {title} {\enquote {\bibinfo {title} {Quantum enhanced multiple phase
  estimation},}\ }\href
  {https://journals.aps.org/prl/abstract/10.1103/PhysRevLett.111.070403}
  {\bibfield  {journal} {\bibinfo  {journal} {Phys. Rev. Lett.}\ }\textbf
  {\bibinfo {volume} {111}},\ \bibinfo {pages} {070403} (\bibinfo {year}
  {2013})}\BibitemShut {NoStop}%
\bibitem [{\citenamefont {Liu}\ \emph {et~al.}(2016)\citenamefont {Liu},
  \citenamefont {Lu}, \citenamefont {Sun},\ and\ \citenamefont
  {Wang}}]{liu2014quantum}%
  \BibitemOpen
  \bibfield  {author} {\bibinfo {author} {\bibfnamefont {J.}~\bibnamefont
  {Liu}}, \bibinfo {author} {\bibfnamefont {X.-M.}\ \bibnamefont {Lu}},
  \bibinfo {author} {\bibfnamefont {Z.}~\bibnamefont {Sun}}, \ and\ \bibinfo
  {author} {\bibfnamefont {X.}~\bibnamefont {Wang}},\ }\bibfield  {title}
  {\enquote {\bibinfo {title} {Quantum multiparameter metrology with
  generalized entangled coherent state},}\ }\href
  {http://iopscience.iop.org/article/10.1088/1751-8113/49/11/115302} {\bibfield
   {journal} {\bibinfo  {journal} {J. Phys. A: Math. Theor.}\ }\textbf
  {\bibinfo {volume} {49}},\ \bibinfo {pages} {115302} (\bibinfo {year}
  {2016})}\BibitemShut {NoStop}%
\bibitem [{\citenamefont {Yue}\ \emph {et~al.}(2014)\citenamefont {Yue},
  \citenamefont {Zhang},\ and\ \citenamefont {Fan}}]{yue2014quantum}%
  \BibitemOpen
  \bibfield  {author} {\bibinfo {author} {\bibfnamefont {J.-D.}\ \bibnamefont
  {Yue}}, \bibinfo {author} {\bibfnamefont {Y.-R.}\ \bibnamefont {Zhang}}, \
  and\ \bibinfo {author} {\bibfnamefont {H.}~\bibnamefont {Fan}},\ }\bibfield
  {title} {\enquote {\bibinfo {title} {Quantum-enhanced metrology for multiple
  phase estimation with noise},}\ }\href
  {https://www.nature.com/articles/srep05933} {\bibfield  {journal} {\bibinfo
  {journal} {Sci. Rep.}\ }\textbf {\bibinfo {volume} {4}} (\bibinfo {year}
  {2014})}\BibitemShut {NoStop}%
\bibitem [{\citenamefont {Ciampini}\ \emph {et~al.}(2016)\citenamefont
  {Ciampini}, \citenamefont {Spagnolo}, \citenamefont {Vitelli}, \citenamefont
  {Pezz{\`e}}, \citenamefont {Smerzi},\ and\ \citenamefont
  {Sciarrino}}]{ciampini2015quantum}%
  \BibitemOpen
  \bibfield  {author} {\bibinfo {author} {\bibfnamefont {M.~A.}\ \bibnamefont
  {Ciampini}}, \bibinfo {author} {\bibfnamefont {N.}~\bibnamefont {Spagnolo}},
  \bibinfo {author} {\bibfnamefont {C.}~\bibnamefont {Vitelli}}, \bibinfo
  {author} {\bibfnamefont {L.}~\bibnamefont {Pezz{\`e}}}, \bibinfo {author}
  {\bibfnamefont {A.}~\bibnamefont {Smerzi}}, \ and\ \bibinfo {author}
  {\bibfnamefont {F.}~\bibnamefont {Sciarrino}},\ }\bibfield  {title} {\enquote
  {\bibinfo {title} {Quantum-enhanced multiparameter estimation in multiarm
  interferometers},}\ }\href {https://www.nature.com/articles/srep28881}
  {\bibfield  {journal} {\bibinfo  {journal} {Sci. Rep.}\ }\textbf {\bibinfo
  {volume} {6}} (\bibinfo {year} {2016})}\BibitemShut {NoStop}%
\bibitem [{\citenamefont {Knott}\ \emph {et~al.}(2016)\citenamefont {Knott},
  \citenamefont {Proctor}, \citenamefont {Hayes}, \citenamefont {Ralph},
  \citenamefont {Kok},\ and\ \citenamefont {Dunningham}}]{knott2016local}%
  \BibitemOpen
  \bibfield  {author} {\bibinfo {author} {\bibfnamefont {P.~A.}\ \bibnamefont
  {Knott}}, \bibinfo {author} {\bibfnamefont {T.~J.}\ \bibnamefont {Proctor}},
  \bibinfo {author} {\bibfnamefont {A.~J.}\ \bibnamefont {Hayes}}, \bibinfo
  {author} {\bibfnamefont {J.~F.}\ \bibnamefont {Ralph}}, \bibinfo {author}
  {\bibfnamefont {P.}~\bibnamefont {Kok}}, \ and\ \bibinfo {author}
  {\bibfnamefont {J.~A.}\ \bibnamefont {Dunningham}},\ }\bibfield  {title}
  {\enquote {\bibinfo {title} {Local versus global strategies in
  multi-parameter estimation},}\ }\href
  {https://journals.aps.org/pra/abstract/10.1103/PhysRevA.94.062312} {\bibfield
   {journal} {\bibinfo  {journal} {Phys. Rev. A}\ }\textbf {\bibinfo {volume}
  {94}},\ \bibinfo {pages} {062312} (\bibinfo {year} {2016})}\BibitemShut
  {NoStop}%
\bibitem [{\citenamefont {Gagatsos}\ \emph {et~al.}(2016)\citenamefont
  {Gagatsos}, \citenamefont {Branford},\ and\ \citenamefont
  {Datta}}]{gagatsos2016gaussian}%
  \BibitemOpen
  \bibfield  {author} {\bibinfo {author} {\bibfnamefont {C.~N.}\ \bibnamefont
  {Gagatsos}}, \bibinfo {author} {\bibfnamefont {D.}~\bibnamefont {Branford}},
  \ and\ \bibinfo {author} {\bibfnamefont {A.}~\bibnamefont {Datta}},\
  }\bibfield  {title} {\enquote {\bibinfo {title} {Gaussian systems for
  quantum-enhanced multiple phase estimation},}\ }\href
  {https://journals.aps.org/pra/abstract/10.1103/PhysRevA.94.042342} {\bibfield
   {journal} {\bibinfo  {journal} {Phys. Rev. A}\ }\textbf {\bibinfo {volume}
  {94}},\ \bibinfo {pages} {042342} (\bibinfo {year} {2016})}\BibitemShut
  {NoStop}%
\bibitem [{\citenamefont {Zhang}\ and\ \citenamefont
  {Chan}(2017)}]{zhang2017quantum}%
  \BibitemOpen
  \bibfield  {author} {\bibinfo {author} {\bibfnamefont {L.}~\bibnamefont
  {Zhang}}\ and\ \bibinfo {author} {\bibfnamefont {K.~W.~C.}\ \bibnamefont
  {Chan}},\ }\bibfield  {title} {\enquote {\bibinfo {title} {Quantum
  multiparameter estimation with generalized balanced multimode noon-like
  states},}\ }\href
  {https://journals.aps.org/pra/abstract/10.1103/PhysRevA.95.032321} {\bibfield
   {journal} {\bibinfo  {journal} {Phys. Rev. A}\ }\textbf {\bibinfo {volume}
  {95}},\ \bibinfo {pages} {032321} (\bibinfo {year} {2017})}\BibitemShut
  {NoStop}%
\bibitem [{\citenamefont {Komar}\ \emph {et~al.}(2014)\citenamefont {Komar},
  \citenamefont {Kessler}, \citenamefont {Bishof}, \citenamefont {Jiang},
  \citenamefont {S{\o}rensen}, \citenamefont {Ye},\ and\ \citenamefont
  {Lukin}}]{komar2014quantum}%
  \BibitemOpen
  \bibfield  {author} {\bibinfo {author} {\bibfnamefont {P.}~\bibnamefont
  {Komar}}, \bibinfo {author} {\bibfnamefont {E.~M.}\ \bibnamefont {Kessler}},
  \bibinfo {author} {\bibfnamefont {M.}~\bibnamefont {Bishof}}, \bibinfo
  {author} {\bibfnamefont {L.}~\bibnamefont {Jiang}}, \bibinfo {author}
  {\bibfnamefont {A.~S.}\ \bibnamefont {S{\o}rensen}}, \bibinfo {author}
  {\bibfnamefont {J.}~\bibnamefont {Ye}}, \ and\ \bibinfo {author}
  {\bibfnamefont {M.~D.}\ \bibnamefont {Lukin}},\ }\bibfield  {title} {\enquote
  {\bibinfo {title} {A quantum network of clocks},}\ }\href
  {https://www.nature.com/articles/nphys3000} {\bibfield  {journal} {\bibinfo
  {journal} {Nat. Phys.}\ } (\bibinfo {year} {2014})}\BibitemShut {NoStop}%
\bibitem [{\citenamefont {Eldredge}\ \emph {et~al.}(2016)\citenamefont
  {Eldredge}, \citenamefont {Foss-Feig}, \citenamefont {Rolston},\ and\
  \citenamefont {Gorshkov}}]{eldredge2016optimal}%
  \BibitemOpen
  \bibfield  {author} {\bibinfo {author} {\bibfnamefont {Z.}~\bibnamefont
  {Eldredge}}, \bibinfo {author} {\bibfnamefont {M.}~\bibnamefont {Foss-Feig}},
  \bibinfo {author} {\bibfnamefont {S.~L.}\ \bibnamefont {Rolston}}, \ and\
  \bibinfo {author} {\bibfnamefont {A.~V.}\ \bibnamefont {Gorshkov}},\
  }\bibfield  {title} {\enquote {\bibinfo {title} {Optimal and secure
  measurement protocols for quantum sensor networks},}\ }\href
  {https://arxiv.org/abs/1607.04646} {\bibfield  {journal} {\bibinfo  {journal}
  {arXiv preprint arXiv:1607.04646}\ } (\bibinfo {year} {2016})}\BibitemShut
  {NoStop}%
\bibitem [{\citenamefont {Kok}\ \emph {et~al.}(2017)\citenamefont {Kok},
  \citenamefont {Dunningham},\ and\ \citenamefont {Ralph}}]{kok2017role}%
  \BibitemOpen
  \bibfield  {author} {\bibinfo {author} {\bibfnamefont {P.}~\bibnamefont
  {Kok}}, \bibinfo {author} {\bibfnamefont {J.}~\bibnamefont {Dunningham}}, \
  and\ \bibinfo {author} {\bibfnamefont {J.~F.}\ \bibnamefont {Ralph}},\
  }\bibfield  {title} {\enquote {\bibinfo {title} {Role of entanglement in
  calibrating optical quantum gyroscopes},}\ }\href
  {https://journals.aps.org/pra/abstract/10.1103/PhysRevA.95.012326} {\bibfield
   {journal} {\bibinfo  {journal} {Phys. Rev. A}\ }\textbf {\bibinfo {volume}
  {95}},\ \bibinfo {pages} {012326} (\bibinfo {year} {2017})}\BibitemShut
  {NoStop}%
\bibitem [{\citenamefont {Pyrkov}\ and\ \citenamefont
  {Byrnes}(2013)}]{pyrkov2013entanglement}%
  \BibitemOpen
  \bibfield  {author} {\bibinfo {author} {\bibfnamefont {A.~N.}\ \bibnamefont
  {Pyrkov}}\ and\ \bibinfo {author} {\bibfnamefont {T.}~\bibnamefont
  {Byrnes}},\ }\bibfield  {title} {\enquote {\bibinfo {title} {Entanglement
  generation in quantum networks of bose--einstein condensates},}\ }\href
  {http://iopscience.iop.org/article/10.1088/1367-2630/15/9/093019} {\bibfield
  {journal} {\bibinfo  {journal} {New J. Phys.}\ }\textbf {\bibinfo {volume}
  {15}},\ \bibinfo {pages} {093019} (\bibinfo {year} {2013})}\BibitemShut
  {NoStop}%
\bibitem [{\citenamefont {Wallquist}\ \emph {et~al.}(2009)\citenamefont
  {Wallquist}, \citenamefont {Hammerer}, \citenamefont {Rabl}, \citenamefont
  {Lukin},\ and\ \citenamefont {Zoller}}]{wallquist2009hybrid}%
  \BibitemOpen
  \bibfield  {author} {\bibinfo {author} {\bibfnamefont {M.}~\bibnamefont
  {Wallquist}}, \bibinfo {author} {\bibfnamefont {K.}~\bibnamefont {Hammerer}},
  \bibinfo {author} {\bibfnamefont {P.}~\bibnamefont {Rabl}}, \bibinfo {author}
  {\bibfnamefont {M.}~\bibnamefont {Lukin}}, \ and\ \bibinfo {author}
  {\bibfnamefont {P.}~\bibnamefont {Zoller}},\ }\bibfield  {title} {\enquote
  {\bibinfo {title} {Hybrid quantum devices and quantum engineering},}\ }\href
  {http://iopscience.iop.org/article/10.1088/0031-8949/2009/T137/014001}
  {\bibfield  {journal} {\bibinfo  {journal} {Phys. Scripta}\ }\textbf
  {\bibinfo {volume} {2009}},\ \bibinfo {pages} {014001} (\bibinfo {year}
  {2009})}\BibitemShut {NoStop}%
\bibitem [{\citenamefont {Szczykulska}\ \emph {et~al.}(2016)\citenamefont
  {Szczykulska}, \citenamefont {Baumgratz},\ and\ \citenamefont
  {Datta}}]{szczykulska2016multi}%
  \BibitemOpen
  \bibfield  {author} {\bibinfo {author} {\bibfnamefont {M.}~\bibnamefont
  {Szczykulska}}, \bibinfo {author} {\bibfnamefont {T.}~\bibnamefont
  {Baumgratz}}, \ and\ \bibinfo {author} {\bibfnamefont {A.}~\bibnamefont
  {Datta}},\ }\bibfield  {title} {\enquote {\bibinfo {title} {Multi-parameter
  quantum metrology},}\ }\href
  {http://www.tandfonline.com/doi/full/10.1080/23746149.2016.1230476}
  {\bibfield  {journal} {\bibinfo  {journal} {Adv. Phys. X}\ }\textbf {\bibinfo
  {volume} {1}},\ \bibinfo {pages} {621--639} (\bibinfo {year}
  {2016})}\BibitemShut {NoStop}%
\bibitem [{\citenamefont {Giovannetti}\ \emph {et~al.}(2011)\citenamefont
  {Giovannetti}, \citenamefont {Lloyd},\ and\ \citenamefont
  {Maccone}}]{giovannetti2011advances}%
  \BibitemOpen
  \bibfield  {author} {\bibinfo {author} {\bibfnamefont {V.}~\bibnamefont
  {Giovannetti}}, \bibinfo {author} {\bibfnamefont {S.}~\bibnamefont {Lloyd}},
  \ and\ \bibinfo {author} {\bibfnamefont {L.}~\bibnamefont {Maccone}},\
  }\bibfield  {title} {\enquote {\bibinfo {title} {Advances in quantum
  metrology},}\ }\href {https://www.nature.com/articles/nphoton.2011.35}
  {\bibfield  {journal} {\bibinfo  {journal} {Nat. Photon.}\ }\textbf {\bibinfo
  {volume} {5}},\ \bibinfo {pages} {222--229} (\bibinfo {year}
  {2011})}\BibitemShut {NoStop}%
\bibitem [{\citenamefont {Zwierz}\ \emph {et~al.}(2010)\citenamefont {Zwierz},
  \citenamefont {P{\'e}rez-Delgado},\ and\ \citenamefont
  {Kok}}]{zwierz2010general}%
  \BibitemOpen
  \bibfield  {author} {\bibinfo {author} {\bibfnamefont {M.}~\bibnamefont
  {Zwierz}}, \bibinfo {author} {\bibfnamefont {C.~A.}\ \bibnamefont
  {P{\'e}rez-Delgado}}, \ and\ \bibinfo {author} {\bibfnamefont
  {P.}~\bibnamefont {Kok}},\ }\bibfield  {title} {\enquote {\bibinfo {title}
  {General optimality of the heisenberg limit for quantum metrology},}\ }\href
  {https://journals.aps.org/prl/abstract/10.1103/PhysRevLett.105.180402}
  {\bibfield  {journal} {\bibinfo  {journal} {Phys. Rev. Lett.}\ }\textbf
  {\bibinfo {volume} {105}},\ \bibinfo {pages} {180402} (\bibinfo {year}
  {2010})}\BibitemShut {NoStop}%
\bibitem [{\citenamefont {Demkowicz-Dobrza{\'n}ski}\ \emph
  {et~al.}(2012)\citenamefont {Demkowicz-Dobrza{\'n}ski}, \citenamefont
  {Ko{\l}ody{\'n}ski},\ and\ \citenamefont
  {Gu{\c{t}}{\u{a}}}}]{demkowicz2012elusive}%
  \BibitemOpen
  \bibfield  {author} {\bibinfo {author} {\bibfnamefont {R.}~\bibnamefont
  {Demkowicz-Dobrza{\'n}ski}}, \bibinfo {author} {\bibfnamefont
  {J.}~\bibnamefont {Ko{\l}ody{\'n}ski}}, \ and\ \bibinfo {author}
  {\bibfnamefont {M.}~\bibnamefont {Gu{\c{t}}{\u{a}}}},\ }\bibfield  {title}
  {\enquote {\bibinfo {title} {The elusive {H}eisenberg limit in
  quantum-enhanced metrology},}\ }\href
  {https://www.nature.com/articles/ncomms2067} {\bibfield  {journal} {\bibinfo
  {journal} {Nat. Commun.}\ }\textbf {\bibinfo {volume} {3}},\ \bibinfo {pages}
  {1063} (\bibinfo {year} {2012})}\BibitemShut {NoStop}%
\bibitem [{\citenamefont {Fujiwara}\ and\ \citenamefont
  {Nagaoka}(1995)}]{fujiwara1995quantum}%
  \BibitemOpen
  \bibfield  {author} {\bibinfo {author} {\bibfnamefont {A.}~\bibnamefont
  {Fujiwara}}\ and\ \bibinfo {author} {\bibfnamefont {H.}~\bibnamefont
  {Nagaoka}},\ }\bibfield  {title} {\enquote {\bibinfo {title} {Quantum
  {F}isher metric and estimation for pure state models},}\ }\href
  {http://www.sciencedirect.com/science/article/pii/0375960195002699}
  {\bibfield  {journal} {\bibinfo  {journal} {Phys. Lett. A}\ }\textbf
  {\bibinfo {volume} {201}},\ \bibinfo {pages} {119--124} (\bibinfo {year}
  {1995})}\BibitemShut {NoStop}%
\bibitem [{\citenamefont {Matsumoto}(2002)}]{matsumoto2002new}%
  \BibitemOpen
  \bibfield  {author} {\bibinfo {author} {\bibfnamefont {K.}~\bibnamefont
  {Matsumoto}},\ }\bibfield  {title} {\enquote {\bibinfo {title} {A new
  approach to the cram{\'e}r-rao-type bound of the pure-state model},}\ }\href
  {http://iopscience.iop.org/article/10.1088/0305-4470/35/13/307} {\bibfield
  {journal} {\bibinfo  {journal} {J. Phys. A}\ }\textbf {\bibinfo {volume}
  {35}},\ \bibinfo {pages} {3111} (\bibinfo {year} {2002})}\BibitemShut
  {NoStop}%
\bibitem [{\citenamefont {Helstrom}(1976)}]{helstrom1976quantum}%
  \BibitemOpen
  \bibfield  {author} {\bibinfo {author} {\bibfnamefont {C.~W.}\ \bibnamefont
  {Helstrom}},\ }\href@noop {} {\emph {\bibinfo {title} {Quantum detection and
  estimation theory}}}\ (\bibinfo  {publisher} {Academic press},\ \bibinfo
  {year} {1976})\BibitemShut {NoStop}%
\bibitem [{\citenamefont {Paris}(2009)}]{paris2009quantum}%
  \BibitemOpen
  \bibfield  {author} {\bibinfo {author} {\bibfnamefont {M.~G.~A.}\
  \bibnamefont {Paris}},\ }\bibfield  {title} {\enquote {\bibinfo {title}
  {Quantum estimation for quantum technology},}\ }\href
  {http://www.worldscientific.com/doi/abs/10.1142/S0219749909004839} {\bibfield
   {journal} {\bibinfo  {journal} {Int. J. Quantum Inf.}\ }\textbf {\bibinfo
  {volume} {7}},\ \bibinfo {pages} {125--137} (\bibinfo {year}
  {2009})}\BibitemShut {NoStop}%
\bibitem [{\citenamefont {Helstrom}(1967)}]{helstrom1967minimum}%
  \BibitemOpen
  \bibfield  {author} {\bibinfo {author} {\bibfnamefont {C.~W.}\ \bibnamefont
  {Helstrom}},\ }\bibfield  {title} {\enquote {\bibinfo {title} {Minimum
  mean-squared error of estimates in quantum statistics},}\ }\href
  {https://www.sciencedirect.com/science/article/pii/0375960167903660}
  {\bibfield  {journal} {\bibinfo  {journal} {Phys. Lett. A}\ }\textbf
  {\bibinfo {volume} {25}},\ \bibinfo {pages} {101--102} (\bibinfo {year}
  {1967})}\BibitemShut {NoStop}%
\bibitem [{\citenamefont {Braunstein}\ and\ \citenamefont
  {Caves}(1994)}]{braunstein1994statistical}%
  \BibitemOpen
  \bibfield  {author} {\bibinfo {author} {\bibfnamefont {S.~L.}\ \bibnamefont
  {Braunstein}}\ and\ \bibinfo {author} {\bibfnamefont {C.~M.}\ \bibnamefont
  {Caves}},\ }\bibfield  {title} {\enquote {\bibinfo {title} {Statistical
  distance and the geometry of quantum states},}\ }\href
  {https://journals.aps.org/prl/abstract/10.1103/PhysRevLett.72.3439}
  {\bibfield  {journal} {\bibinfo  {journal} {Phys. Rev. Lett.}\ }\textbf
  {\bibinfo {volume} {72}},\ \bibinfo {pages} {3439--3443} (\bibinfo {year}
  {1994})}\BibitemShut {NoStop}%
\bibitem [{\citenamefont {Demkowicz-Dobrza{\'n}ski}\ \emph
  {et~al.}(2015)\citenamefont {Demkowicz-Dobrza{\'n}ski}, \citenamefont
  {Jarzyna},\ and\ \citenamefont {Ko{\l}ody{\'n}ski}}]{demkowicz2014quantum}%
  \BibitemOpen
  \bibfield  {author} {\bibinfo {author} {\bibfnamefont {R.}~\bibnamefont
  {Demkowicz-Dobrza{\'n}ski}}, \bibinfo {author} {\bibfnamefont
  {M.}~\bibnamefont {Jarzyna}}, \ and\ \bibinfo {author} {\bibfnamefont
  {J.}~\bibnamefont {Ko{\l}ody{\'n}ski}},\ }\bibfield  {title} {\enquote
  {\bibinfo {title} {Chapter four: Quantum limits in optical interferometry},}\
  }\href {https://www.sciencedirect.com/science/article/pii/S0079663815000049}
  {\bibfield  {journal} {\bibinfo  {journal} {Prog. Opt.}\ }\textbf {\bibinfo
  {volume} {60}},\ \bibinfo {pages} {345--435} (\bibinfo {year}
  {2015})}\BibitemShut {NoStop}%
\bibitem [{\citenamefont {Vidrighin}\ \emph {et~al.}(2014)\citenamefont
  {Vidrighin}, \citenamefont {Donati}, \citenamefont {Genoni}, \citenamefont
  {Jin}, \citenamefont {Kolthammer}, \citenamefont {Kim}, \citenamefont
  {Datta}, \citenamefont {Barbieri},\ and\ \citenamefont
  {Walmsley}}]{vidrighin2014joint}%
  \BibitemOpen
  \bibfield  {author} {\bibinfo {author} {\bibfnamefont {M.~D.}\ \bibnamefont
  {Vidrighin}}, \bibinfo {author} {\bibfnamefont {G.}~\bibnamefont {Donati}},
  \bibinfo {author} {\bibfnamefont {M.~G.}\ \bibnamefont {Genoni}}, \bibinfo
  {author} {\bibfnamefont {X.-M.}\ \bibnamefont {Jin}}, \bibinfo {author}
  {\bibfnamefont {W.~S.}\ \bibnamefont {Kolthammer}}, \bibinfo {author}
  {\bibfnamefont {M.~S.}\ \bibnamefont {Kim}}, \bibinfo {author} {\bibfnamefont
  {A.}~\bibnamefont {Datta}}, \bibinfo {author} {\bibfnamefont
  {M.}~\bibnamefont {Barbieri}}, \ and\ \bibinfo {author} {\bibfnamefont
  {I.~A.}\ \bibnamefont {Walmsley}},\ }\bibfield  {title} {\enquote {\bibinfo
  {title} {Joint estimation of phase and phase diffusion for quantum
  metrology},}\ }\href {https://www.nature.com/articles/ncomms4532} {\bibfield
  {journal} {\bibinfo  {journal} {Nat. Commun.}\ }\textbf {\bibinfo {volume}
  {5}} (\bibinfo {year} {2014})}\BibitemShut {NoStop}%
\bibitem [{\citenamefont {Fujiwara}(2001)}]{fujiwara2001estimation}%
  \BibitemOpen
  \bibfield  {author} {\bibinfo {author} {\bibfnamefont {A.}~\bibnamefont
  {Fujiwara}},\ }\bibfield  {title} {\enquote {\bibinfo {title} {Estimation of
  {SU(2)} operation and dense coding: An information geometric approach},}\
  }\href {https://journals.aps.org/pra/abstract/10.1103/PhysRevA.65.012316}
  {\bibfield  {journal} {\bibinfo  {journal} {Phys. Rev. A}\ }\textbf {\bibinfo
  {volume} {65}},\ \bibinfo {pages} {012316} (\bibinfo {year}
  {2001})}\BibitemShut {NoStop}%
\bibitem [{\citenamefont {Ragy}(2015)}]{ragy2015resources}%
  \BibitemOpen
  \bibfield  {author} {\bibinfo {author} {\bibfnamefont {S.}~\bibnamefont
  {Ragy}},\ }\emph {\bibinfo {title} {Resources in quantum imaging, detection
  and estimation}},\ \href@noop {} {Ph.D. thesis},\ \bibinfo  {school}
  {University of Nottingham} (\bibinfo {year} {2015})\BibitemShut {NoStop}%
\bibitem [{\citenamefont {Ragy}\ \emph {et~al.}(2016)\citenamefont {Ragy},
  \citenamefont {Jarzyna},\ and\ \citenamefont
  {Demkowicz-Dobrza{\'n}ski}}]{ragy2016compatibility}%
  \BibitemOpen
  \bibfield  {author} {\bibinfo {author} {\bibfnamefont {S.}~\bibnamefont
  {Ragy}}, \bibinfo {author} {\bibfnamefont {M.}~\bibnamefont {Jarzyna}}, \
  and\ \bibinfo {author} {\bibfnamefont {R.}~\bibnamefont
  {Demkowicz-Dobrza{\'n}ski}},\ }\bibfield  {title} {\enquote {\bibinfo {title}
  {Compatibility in multiparameter quantum metrology},}\ }\href
  {https://journals.aps.org/pra/abstract/10.1103/PhysRevA.94.052108} {\bibfield
   {journal} {\bibinfo  {journal} {Phys. Rev. A}\ }\textbf {\bibinfo {volume}
  {94}},\ \bibinfo {pages} {052108} (\bibinfo {year} {2016})}\BibitemShut
  {NoStop}%
\bibitem [{\citenamefont {Pezz{{\`e}}}\ \emph {et~al.}(2017)\citenamefont
  {Pezz{{\`e}}}, \citenamefont {Ciampini}, \citenamefont {Spagnolo},
  \citenamefont {Humphreys}, \citenamefont {Datta}, \citenamefont {Walmsley},
  \citenamefont {Barbieri}, \citenamefont {Sciarrino},\ and\ \citenamefont
  {Smerzi}}]{pezze2017optimal}%
  \BibitemOpen
  \bibfield  {author} {\bibinfo {author} {\bibfnamefont {L.}~\bibnamefont
  {Pezz{{\`e}}}}, \bibinfo {author} {\bibfnamefont {M.~A.}\ \bibnamefont
  {Ciampini}}, \bibinfo {author} {\bibfnamefont {N.}~\bibnamefont {Spagnolo}},
  \bibinfo {author} {\bibfnamefont {P.~C.}\ \bibnamefont {Humphreys}}, \bibinfo
  {author} {\bibfnamefont {A.}~\bibnamefont {Datta}}, \bibinfo {author}
  {\bibfnamefont {I.~A.}\ \bibnamefont {Walmsley}}, \bibinfo {author}
  {\bibfnamefont {M.}~\bibnamefont {Barbieri}}, \bibinfo {author}
  {\bibfnamefont {F.}~\bibnamefont {Sciarrino}}, \ and\ \bibinfo {author}
  {\bibfnamefont {A.}~\bibnamefont {Smerzi}},\ }\bibfield  {title} {\enquote
  {\bibinfo {title} {Optimal measurements for simultaneous quantum estimation
  of multiple phases},}\ }\href
  {https://journals.aps.org/prl/abstract/10.1103/PhysRevLett.119.130504}
  {\bibfield  {journal} {\bibinfo  {journal} {Phys. Rev. Lett.}\ }\textbf
  {\bibinfo {volume} {119}},\ \bibinfo {pages} {130504} (\bibinfo {year}
  {2017})}\BibitemShut {NoStop}%
\bibitem [{\citenamefont {Genoni}\ \emph {et~al.}(2013)\citenamefont {Genoni},
  \citenamefont {Paris}, \citenamefont {Adesso}, \citenamefont {Nha},
  \citenamefont {Knight},\ and\ \citenamefont {Kim}}]{genoni2013optimal}%
  \BibitemOpen
  \bibfield  {author} {\bibinfo {author} {\bibfnamefont {M.~G.}\ \bibnamefont
  {Genoni}}, \bibinfo {author} {\bibfnamefont {M.~G.~A.}\ \bibnamefont
  {Paris}}, \bibinfo {author} {\bibfnamefont {G.}~\bibnamefont {Adesso}},
  \bibinfo {author} {\bibfnamefont {H.}~\bibnamefont {Nha}}, \bibinfo {author}
  {\bibfnamefont {P.~L.}\ \bibnamefont {Knight}}, \ and\ \bibinfo {author}
  {\bibfnamefont {M.~S.}\ \bibnamefont {Kim}},\ }\bibfield  {title} {\enquote
  {\bibinfo {title} {Optimal estimation of joint parameters in phase space},}\
  }\href {https://journals.aps.org/pra/abstract/10.1103/PhysRevA.87.012107}
  {\bibfield  {journal} {\bibinfo  {journal} {Phys. Rev. A}\ }\textbf {\bibinfo
  {volume} {87}},\ \bibinfo {pages} {012107} (\bibinfo {year}
  {2013})}\BibitemShut {NoStop}%
\bibitem [{\citenamefont {Vaneph}\ \emph {et~al.}(2013)\citenamefont {Vaneph},
  \citenamefont {Tufarelli},\ and\ \citenamefont {Genoni}}]{vaneph2013quantum}%
  \BibitemOpen
  \bibfield  {author} {\bibinfo {author} {\bibfnamefont {C.}~\bibnamefont
  {Vaneph}}, \bibinfo {author} {\bibfnamefont {T.}~\bibnamefont {Tufarelli}}, \
  and\ \bibinfo {author} {\bibfnamefont {M.~G.}\ \bibnamefont {Genoni}},\
  }\bibfield  {title} {\enquote {\bibinfo {title} {Quantum estimation of a
  two-phase spin rotation},}\ }\href
  {https://www.degruyter.com/view/j/qmetro.2013.1.issue/qmetro-2013-0003/qmetro-2013-0003.xml}
  {\bibfield  {journal} {\bibinfo  {journal} {Quantum Measurements and Quantum
  Metrology}\ }\textbf {\bibinfo {volume} {1}},\ \bibinfo {pages} {12--20}
  (\bibinfo {year} {2013})}\BibitemShut {NoStop}%
\bibitem [{mpe()}]{mpe_prl_note1}%
  \BibitemOpen
  \href@noop {} {}\bibinfo {note} {Given a fixed time of evolution
  \cite{huelga1997improvement}.}\BibitemShut {Stop}%
\bibitem [{\citenamefont {Huelga}\ \emph {et~al.}(1997)\citenamefont {Huelga},
  \citenamefont {Macchiavello}, \citenamefont {Pellizzari}, \citenamefont
  {Plenio},\ and\ \citenamefont {Cirac}}]{huelga1997improvement}%
  \BibitemOpen
  \bibfield  {author} {\bibinfo {author} {\bibfnamefont {S.~F.}\ \bibnamefont
  {Huelga}}, \bibinfo {author} {\bibfnamefont {C.}~\bibnamefont
  {Macchiavello}}, \bibinfo {author} {\bibfnamefont {A.~K.}\ \bibnamefont
  {Pellizzari}, \bibfnamefont {T.and~Ekert}}, \bibinfo {author} {\bibfnamefont
  {M.~B.}\ \bibnamefont {Plenio}}, \ and\ \bibinfo {author} {\bibfnamefont
  {J.}~\bibnamefont {Cirac}},\ }\bibfield  {title} {\enquote {\bibinfo {title}
  {Improvement of frequency standards with quantum entanglement},}\ }\href
  {https://journals.aps.org/prl/abstract/10.1103/PhysRevLett.79.3865}
  {\bibfield  {journal} {\bibinfo  {journal} {Phys. Rev. Lett.}\ }\textbf
  {\bibinfo {volume} {79}},\ \bibinfo {pages} {3865} (\bibinfo {year}
  {1997})}\BibitemShut {NoStop}%
\bibitem [{\citenamefont {Tanaka}\ \emph {et~al.}(2015)\citenamefont {Tanaka},
  \citenamefont {Knott}, \citenamefont {Matsuzaki}, \citenamefont {Dooley},
  \citenamefont {Yamaguchi}, \citenamefont {Munro},\ and\ \citenamefont
  {Saito}}]{tanaka2014robust}%
  \BibitemOpen
  \bibfield  {author} {\bibinfo {author} {\bibfnamefont {T.}~\bibnamefont
  {Tanaka}}, \bibinfo {author} {\bibfnamefont {P.}~\bibnamefont {Knott}},
  \bibinfo {author} {\bibfnamefont {Y.}~\bibnamefont {Matsuzaki}}, \bibinfo
  {author} {\bibfnamefont {S.}~\bibnamefont {Dooley}}, \bibinfo {author}
  {\bibfnamefont {H.}~\bibnamefont {Yamaguchi}}, \bibinfo {author}
  {\bibfnamefont {W.~J.}\ \bibnamefont {Munro}}, \ and\ \bibinfo {author}
  {\bibfnamefont {S.}~\bibnamefont {Saito}},\ }\bibfield  {title} {\enquote
  {\bibinfo {title} {Proposed robust entanglement-based magnetic field sensor
  beyond the standard quantum limit},}\ }\href
  {https://journals.aps.org/prl/abstract/10.1103/PhysRevLett.115.170801}
  {\bibfield  {journal} {\bibinfo  {journal} {Phys. Rev. Lett.}\ }\textbf
  {\bibinfo {volume} {115}},\ \bibinfo {pages} {170801} (\bibinfo {year}
  {2015})}\BibitemShut {NoStop}%
\bibitem [{\citenamefont {Kessler}\ \emph {et~al.}(2014)\citenamefont
  {Kessler}, \citenamefont {Lovchinsky}, \citenamefont {Sushkov},\ and\
  \citenamefont {Lukin}}]{kessler2014quantum}%
  \BibitemOpen
  \bibfield  {author} {\bibinfo {author} {\bibfnamefont {E.~M.}\ \bibnamefont
  {Kessler}}, \bibinfo {author} {\bibfnamefont {I.}~\bibnamefont {Lovchinsky}},
  \bibinfo {author} {\bibfnamefont {A.~O.}\ \bibnamefont {Sushkov}}, \ and\
  \bibinfo {author} {\bibfnamefont {M.~D.}\ \bibnamefont {Lukin}},\ }\bibfield
  {title} {\enquote {\bibinfo {title} {Quantum error correction for
  metrology},}\ }\href
  {https://journals.aps.org/prl/abstract/10.1103/PhysRevLett.112.150802}
  {\bibfield  {journal} {\bibinfo  {journal} {Phys. Rev. Lett.}\ }\textbf
  {\bibinfo {volume} {112}},\ \bibinfo {pages} {150802} (\bibinfo {year}
  {2014})}\BibitemShut {NoStop}%
\bibitem [{\citenamefont {Liu}\ \emph {et~al.}(2015)\citenamefont {Liu},
  \citenamefont {Jing},\ and\ \citenamefont {Wang}}]{liu2015quantum}%
  \BibitemOpen
  \bibfield  {author} {\bibinfo {author} {\bibfnamefont {J.}~\bibnamefont
  {Liu}}, \bibinfo {author} {\bibfnamefont {X.-X.}\ \bibnamefont {Jing}}, \
  and\ \bibinfo {author} {\bibfnamefont {X.}~\bibnamefont {Wang}},\ }\bibfield
  {title} {\enquote {\bibinfo {title} {Quantum metrology with unitary
  parametrization processes},}\ }\href
  {https://www.nature.com/articles/srep08565} {\bibfield  {journal} {\bibinfo
  {journal} {Sci. Rep.}\ }\textbf {\bibinfo {volume} {5}} (\bibinfo {year}
  {2015})}\BibitemShut {NoStop}%
\bibitem [{\citenamefont {Liu}\ \emph {et~al.}(2014)\citenamefont {Liu},
  \citenamefont {Xiong},\ and\ \citenamefont {Song}}]{liu2014fidelity}%
  \BibitemOpen
  \bibfield  {author} {\bibinfo {author} {\bibfnamefont {J.}~\bibnamefont
  {Liu}}, \bibinfo {author} {\bibfnamefont {H.-N.}\ \bibnamefont {Xiong}}, \
  and\ \bibinfo {author} {\bibfnamefont {X.}~\bibnamefont {Song}, \bibfnamefont
  {F.and~Wang}},\ }\bibfield  {title} {\enquote {\bibinfo {title} {Fidelity
  susceptibility and quantum fisher information for density operators with
  arbitrary ranks},}\ }\href
  {https://www.sciencedirect.com/science/article/pii/S0378437114003926}
  {\bibfield  {journal} {\bibinfo  {journal} {Physica A}\ }\textbf {\bibinfo
  {volume} {410}},\ \bibinfo {pages} {167--173} (\bibinfo {year}
  {2014})}\BibitemShut {NoStop}%
\bibitem [{\citenamefont {Jarzyna}\ and\ \citenamefont
  {Demkowicz-Dobrza{\'n}ski}(2012)}]{jarzyna2012quantum}%
  \BibitemOpen
  \bibfield  {author} {\bibinfo {author} {\bibfnamefont {M.}~\bibnamefont
  {Jarzyna}}\ and\ \bibinfo {author} {\bibfnamefont {R.}~\bibnamefont
  {Demkowicz-Dobrza{\'n}ski}},\ }\bibfield  {title} {\enquote {\bibinfo {title}
  {Quantum interferometry with and without an external phase reference},}\
  }\href {https://journals.aps.org/pra/abstract/10.1103/PhysRevA.85.011801}
  {\bibfield  {journal} {\bibinfo  {journal} {Phys. Rev. A}\ }\textbf {\bibinfo
  {volume} {85}},\ \bibinfo {pages} {011801} (\bibinfo {year}
  {2012})}\BibitemShut {NoStop}%
\bibitem [{\citenamefont {Zhuang}\ \emph {et~al.}(2017)\citenamefont {Zhuang},
  \citenamefont {Zhang},\ and\ \citenamefont
  {Shapiro}}]{zhuang2017entanglement}%
  \BibitemOpen
  \bibfield  {author} {\bibinfo {author} {\bibfnamefont {Q.}~\bibnamefont
  {Zhuang}}, \bibinfo {author} {\bibfnamefont {Z.}~\bibnamefont {Zhang}}, \
  and\ \bibinfo {author} {\bibfnamefont {J.~H.}\ \bibnamefont {Shapiro}},\
  }\bibfield  {title} {\enquote {\bibinfo {title} {Entanglement-enhanced lidars
  for simultaneous range and velocity measurements},}\ }\href
  {https://journals.aps.org/pra/pdf/10.1103/PhysRevA.96.040304} {\bibfield
  {journal} {\bibinfo  {journal} {Phys. Rev. A}\ }\textbf {\bibinfo {volume}
  {96}} (\bibinfo {year} {2017})}\BibitemShut {NoStop}%
\bibitem [{\citenamefont {Ballester}(2004)}]{ballester2004estimation}%
  \BibitemOpen
  \bibfield  {author} {\bibinfo {author} {\bibfnamefont {M.~A.}\ \bibnamefont
  {Ballester}},\ }\bibfield  {title} {\enquote {\bibinfo {title} {Estimation of
  unitary quantum operations},}\ }\href
  {https://journals.aps.org/pra/pdf/10.1103/PhysRevA.69.022303} {\bibfield
  {journal} {\bibinfo  {journal} {Phys. Rev. A}\ }\textbf {\bibinfo {volume}
  {69}},\ \bibinfo {pages} {022303} (\bibinfo {year} {2004})}\BibitemShut
  {NoStop}%
\bibitem [{\citenamefont {Aasi}\ \emph {et~al.}(2013)\citenamefont {Aasi},
  \citenamefont {Abadie}, \citenamefont {Abbott}, \citenamefont {Abbott},
  \citenamefont {Abbott}, \citenamefont {Abernathy}, \citenamefont {Adams},
  \citenamefont {Adams}, \citenamefont {Addesso}, \citenamefont {Adhikari}
  \emph {et~al.}}]{aasi2013enhanced}%
  \BibitemOpen
  \bibfield  {author} {\bibinfo {author} {\bibfnamefont {J.}~\bibnamefont
  {Aasi}}, \bibinfo {author} {\bibfnamefont {J.}~\bibnamefont {Abadie}},
  \bibinfo {author} {\bibfnamefont {B.~P.}\ \bibnamefont {Abbott}}, \bibinfo
  {author} {\bibfnamefont {R.}~\bibnamefont {Abbott}}, \bibinfo {author}
  {\bibfnamefont {T.~D.}\ \bibnamefont {Abbott}}, \bibinfo {author}
  {\bibfnamefont {MR}~\bibnamefont {Abernathy}}, \bibinfo {author}
  {\bibfnamefont {C.}~\bibnamefont {Adams}}, \bibinfo {author} {\bibfnamefont
  {T.}~\bibnamefont {Adams}}, \bibinfo {author} {\bibfnamefont
  {P.}~\bibnamefont {Addesso}}, \bibinfo {author} {\bibfnamefont {R.~X.}\
  \bibnamefont {Adhikari}},  \emph {et~al.},\ }\bibfield  {title} {\enquote
  {\bibinfo {title} {Enhanced sensitivity of the {LIGO} gravitational wave
  detector by using squeezed states of light},}\ }\href
  {https://www.nature.com/articles/nphoton.2013.177} {\bibfield  {journal}
  {\bibinfo  {journal} {Nat. Photon.}\ }\textbf {\bibinfo {volume} {7}},\
  \bibinfo {pages} {613--619} (\bibinfo {year} {2013})}\BibitemShut {NoStop}%
\bibitem [{\citenamefont {Zhang}\ \emph {et~al.}(2014)\citenamefont {Zhang},
  \citenamefont {Wang}, \citenamefont {Jing}, \citenamefont {Mu},\ and\
  \citenamefont {Fan}}]{zhang2014fitting}%
  \BibitemOpen
  \bibfield  {author} {\bibinfo {author} {\bibfnamefont {Yong-Liang}\
  \bibnamefont {Zhang}}, \bibinfo {author} {\bibfnamefont {Huan}\ \bibnamefont
  {Wang}}, \bibinfo {author} {\bibfnamefont {Li}~\bibnamefont {Jing}}, \bibinfo
  {author} {\bibfnamefont {Liang-Zhu}\ \bibnamefont {Mu}}, \ and\ \bibinfo
  {author} {\bibfnamefont {Heng}\ \bibnamefont {Fan}},\ }\bibfield  {title}
  {\enquote {\bibinfo {title} {Fitting magnetic field gradient with
  heisenberg-scaling accuracy},}\ }\href
  {https://www.nature.com/articles/srep07390} {\bibfield  {journal} {\bibinfo
  {journal} {Sci. Rep.}\ }\textbf {\bibinfo {volume} {4}} (\bibinfo {year}
  {2014})}\BibitemShut {NoStop}%
\bibitem [{\citenamefont {Ng}\ and\ \citenamefont {Kim}(2014)}]{ng2014quantum}%
  \BibitemOpen
  \bibfield  {author} {\bibinfo {author} {\bibfnamefont {H.~T.}\ \bibnamefont
  {Ng}}\ and\ \bibinfo {author} {\bibfnamefont {K.}~\bibnamefont {Kim}},\
  }\bibfield  {title} {\enquote {\bibinfo {title} {Quantum estimation of
  magnetic-field gradient using {W}-state},}\ }\href
  {https://www.sciencedirect.com/science/article/pii/S0030401814005951}
  {\bibfield  {journal} {\bibinfo  {journal} {Opt. Commun.}\ }\textbf {\bibinfo
  {volume} {331}},\ \bibinfo {pages} {353--358} (\bibinfo {year}
  {2014})}\BibitemShut {NoStop}%
\bibitem [{\citenamefont {Bollinger}\ \emph {et~al.}(1996)\citenamefont
  {Bollinger}, \citenamefont {Itano}, \citenamefont {Wineland},\ and\
  \citenamefont {Heinzen}}]{bollinger1996optimal}%
  \BibitemOpen
  \bibfield  {author} {\bibinfo {author} {\bibfnamefont {J.~J.}\ \bibnamefont
  {Bollinger}}, \bibinfo {author} {\bibfnamefont {W.~M.}\ \bibnamefont
  {Itano}}, \bibinfo {author} {\bibfnamefont {D.~J.}\ \bibnamefont {Wineland}},
  \ and\ \bibinfo {author} {\bibfnamefont {D.~J.}\ \bibnamefont {Heinzen}},\
  }\bibfield  {title} {\enquote {\bibinfo {title} {Optimal frequency
  measurements with maximally correlated states},}\ }\href
  {https://www.sciencedirect.com/science/article/pii/S0030401814005951}
  {\bibfield  {journal} {\bibinfo  {journal} {Phys. Rev. A}\ }\textbf {\bibinfo
  {volume} {54}},\ \bibinfo {pages} {R4649} (\bibinfo {year}
  {1996})}\BibitemShut {NoStop}%
\bibitem [{\citenamefont {Ge}\ \emph {et~al.}(2017)\citenamefont {Ge},
  \citenamefont {Jacobs}, \citenamefont {Eldredge}, \citenamefont {Gorshkov},\
  and\ \citenamefont {Foss-Feig}}]{ge2017distributed}%
  \BibitemOpen
  \bibfield  {author} {\bibinfo {author} {\bibfnamefont {W.}~\bibnamefont
  {Ge}}, \bibinfo {author} {\bibfnamefont {K.}~\bibnamefont {Jacobs}}, \bibinfo
  {author} {\bibfnamefont {Z.}~\bibnamefont {Eldredge}}, \bibinfo {author}
  {\bibfnamefont {A.~V.}\ \bibnamefont {Gorshkov}}, \ and\ \bibinfo {author}
  {\bibfnamefont {M.}~\bibnamefont {Foss-Feig}},\ }\bibfield  {title} {\enquote
  {\bibinfo {title} {Distributed quantum metrology and the entangling power of
  linear networks},}\ }\href {https://arxiv.org/abs/1707.06655} {\bibfield
  {journal} {\bibinfo  {journal} {arXiv preprint arXiv:1707.06655}\ } (\bibinfo
  {year} {2017})}\BibitemShut {NoStop}%
\bibitem [{\citenamefont {Nielsen}\ and\ \citenamefont
  {Chuang}(2010)}]{nielsen2010quantum}%
  \BibitemOpen
  \bibfield  {author} {\bibinfo {author} {\bibfnamefont {M.~A.}\ \bibnamefont
  {Nielsen}}\ and\ \bibinfo {author} {\bibfnamefont {I.~L.}\ \bibnamefont
  {Chuang}},\ }\href@noop {} {\emph {\bibinfo {title} {Quantum computation and
  quantum information}}}\ (\bibinfo  {publisher} {Cambridge University Press},\
  \bibinfo {year} {2010})\BibitemShut {NoStop}%
\bibitem [{\citenamefont {Wolfgramm}\ \emph {et~al.}(2013)\citenamefont
  {Wolfgramm}, \citenamefont {Vitelli}, \citenamefont {Beduini}, \citenamefont
  {Godbout},\ and\ \citenamefont {Mitchell}}]{wolfgramm2013entanglement}%
  \BibitemOpen
  \bibfield  {author} {\bibinfo {author} {\bibfnamefont {F.}~\bibnamefont
  {Wolfgramm}}, \bibinfo {author} {\bibfnamefont {C.}~\bibnamefont {Vitelli}},
  \bibinfo {author} {\bibfnamefont {F.~A.}\ \bibnamefont {Beduini}}, \bibinfo
  {author} {\bibfnamefont {N.}~\bibnamefont {Godbout}}, \ and\ \bibinfo
  {author} {\bibfnamefont {M.~W.}\ \bibnamefont {Mitchell}},\ }\bibfield
  {title} {\enquote {\bibinfo {title} {Entanglement-enhanced probing of a
  delicate material system},}\ }\href
  {https://www.nature.com/articles/nphoton.2012.300} {\bibfield  {journal}
  {\bibinfo  {journal} {Nat. Photon.}\ }\textbf {\bibinfo {volume} {7}},\
  \bibinfo {pages} {28--32} (\bibinfo {year} {2013})}\BibitemShut {NoStop}%
\bibitem [{\citenamefont {Carlton}\ \emph {et~al.}(2010)\citenamefont
  {Carlton}, \citenamefont {Boulanger}, \citenamefont {Kervrann}, \citenamefont
  {Sibarita}, \citenamefont {Salamero}, \citenamefont {Gordon-Messer},
  \citenamefont {Bressan}, \citenamefont {Haber}, \citenamefont {Haase},
  \citenamefont {Shao} \emph {et~al.}}]{carlton2010fast}%
  \BibitemOpen
  \bibfield  {author} {\bibinfo {author} {\bibfnamefont {P.~M.}\ \bibnamefont
  {Carlton}}, \bibinfo {author} {\bibfnamefont {J.}~\bibnamefont {Boulanger}},
  \bibinfo {author} {\bibfnamefont {C.}~\bibnamefont {Kervrann}}, \bibinfo
  {author} {\bibfnamefont {J.-B.}\ \bibnamefont {Sibarita}}, \bibinfo {author}
  {\bibfnamefont {J.}~\bibnamefont {Salamero}}, \bibinfo {author}
  {\bibfnamefont {S.}~\bibnamefont {Gordon-Messer}}, \bibinfo {author}
  {\bibfnamefont {D.}~\bibnamefont {Bressan}}, \bibinfo {author} {\bibfnamefont
  {J.~E.}\ \bibnamefont {Haber}}, \bibinfo {author} {\bibfnamefont
  {S.}~\bibnamefont {Haase}}, \bibinfo {author} {\bibfnamefont
  {L.}~\bibnamefont {Shao}},  \emph {et~al.},\ }\bibfield  {title} {\enquote
  {\bibinfo {title} {Fast live simultaneous multiwavelength four-dimensional
  optical microscopy},}\ }\href
  {http://www.pnas.org/content/107/37/16016.abstract} {\bibfield  {journal}
  {\bibinfo  {journal} {Proc. Natl. Acad. Sci.}\ }\textbf {\bibinfo {volume}
  {107}},\ \bibinfo {pages} {16016--16022} (\bibinfo {year}
  {2010})}\BibitemShut {NoStop}%
\bibitem [{\citenamefont {Taylor}\ \emph {et~al.}(2013)\citenamefont {Taylor},
  \citenamefont {Janousek}, \citenamefont {Daria}, \citenamefont {Knittel},
  \citenamefont {Hage}, \citenamefont {Bachor},\ and\ \citenamefont
  {Bowen}}]{taylor2013biological}%
  \BibitemOpen
  \bibfield  {author} {\bibinfo {author} {\bibfnamefont {M.~A.}\ \bibnamefont
  {Taylor}}, \bibinfo {author} {\bibfnamefont {J.}~\bibnamefont {Janousek}},
  \bibinfo {author} {\bibfnamefont {V.}~\bibnamefont {Daria}}, \bibinfo
  {author} {\bibfnamefont {J.}~\bibnamefont {Knittel}}, \bibinfo {author}
  {\bibfnamefont {B.}~\bibnamefont {Hage}}, \bibinfo {author} {\bibfnamefont
  {H.-A.}\ \bibnamefont {Bachor}}, \ and\ \bibinfo {author} {\bibfnamefont
  {W.~P.}\ \bibnamefont {Bowen}},\ }\bibfield  {title} {\enquote {\bibinfo
  {title} {Biological measurement beyond the quantum limit},}\ }\href
  {https://www.nature.com/articles/nphoton.2012.346} {\bibfield  {journal}
  {\bibinfo  {journal} {Nat. Photon.}\ }\textbf {\bibinfo {volume} {7}},\
  \bibinfo {pages} {229--233} (\bibinfo {year} {2013})}\BibitemShut {NoStop}%
\bibitem [{\citenamefont {Tey}\ \emph {et~al.}(2008)\citenamefont {Tey},
  \citenamefont {Chen}, \citenamefont {Aljunid}, \citenamefont {Chng},
  \citenamefont {Huber}, \citenamefont {Maslennikov},\ and\ \citenamefont
  {Kurtsiefer}}]{tey2008strong}%
  \BibitemOpen
  \bibfield  {author} {\bibinfo {author} {\bibfnamefont {M.~K.}\ \bibnamefont
  {Tey}}, \bibinfo {author} {\bibfnamefont {Z.}~\bibnamefont {Chen}}, \bibinfo
  {author} {\bibfnamefont {S.~A.}\ \bibnamefont {Aljunid}}, \bibinfo {author}
  {\bibfnamefont {B.}~\bibnamefont {Chng}}, \bibinfo {author} {\bibfnamefont
  {F.}~\bibnamefont {Huber}}, \bibinfo {author} {\bibfnamefont
  {G.}~\bibnamefont {Maslennikov}}, \ and\ \bibinfo {author} {\bibfnamefont
  {C.}~\bibnamefont {Kurtsiefer}},\ }\bibfield  {title} {\enquote {\bibinfo
  {title} {Strong interaction between light and a single trapped atom without
  the need for a cavity},}\ }\href {https://www.nature.com/articles/nphys1096}
  {\bibfield  {journal} {\bibinfo  {journal} {Nature Phys.}\ }\textbf {\bibinfo
  {volume} {4}},\ \bibinfo {pages} {924--927} (\bibinfo {year}
  {2008})}\BibitemShut {NoStop}%
\bibitem [{\citenamefont {Eckert}\ \emph {et~al.}(2008)\citenamefont {Eckert},
  \citenamefont {Romero-Isart}, \citenamefont {Rodriguez}, \citenamefont
  {Lewenstein},\ and\ \citenamefont {Sanpera}}]{eckert2008quantum}%
  \BibitemOpen
  \bibfield  {author} {\bibinfo {author} {\bibfnamefont {K.}~\bibnamefont
  {Eckert}}, \bibinfo {author} {\bibfnamefont {O.}~\bibnamefont
  {Romero-Isart}}, \bibinfo {author} {\bibfnamefont {M.}~\bibnamefont
  {Rodriguez}}, \bibinfo {author} {\bibfnamefont {E.~S.}\ \bibnamefont
  {Lewenstein}, \bibfnamefont {M.and~Polzik}}, \ and\ \bibinfo {author}
  {\bibfnamefont {A.}~\bibnamefont {Sanpera}},\ }\bibfield  {title} {\enquote
  {\bibinfo {title} {Quantum non-demolition detection of strongly correlated
  systems},}\ }\href {https://www.nature.com/articles/nphys776} {\bibfield
  {journal} {\bibinfo  {journal} {Nature Phys.}\ }\textbf {\bibinfo {volume}
  {4}},\ \bibinfo {pages} {50--54} (\bibinfo {year} {2008})}\BibitemShut
  {NoStop}%
\bibitem [{\citenamefont {Pototschnig}\ \emph {et~al.}(2011)\citenamefont
  {Pototschnig}, \citenamefont {Chassagneux}, \citenamefont {Hwang},
  \citenamefont {Zumofen}, \citenamefont {Renn},\ and\ \citenamefont
  {Sandoghdar}}]{pototschnig2011controlling}%
  \BibitemOpen
  \bibfield  {author} {\bibinfo {author} {\bibfnamefont {M.}~\bibnamefont
  {Pototschnig}}, \bibinfo {author} {\bibfnamefont {Y.}~\bibnamefont
  {Chassagneux}}, \bibinfo {author} {\bibfnamefont {J.}~\bibnamefont {Hwang}},
  \bibinfo {author} {\bibfnamefont {G.}~\bibnamefont {Zumofen}}, \bibinfo
  {author} {\bibfnamefont {A.}~\bibnamefont {Renn}}, \ and\ \bibinfo {author}
  {\bibfnamefont {V.}~\bibnamefont {Sandoghdar}},\ }\bibfield  {title}
  {\enquote {\bibinfo {title} {Controlling the phase of a light beam with a
  single molecule},}\ }\href
  {https://journals.aps.org/prl/abstract/10.1103/PhysRevLett.107.063001}
  {\bibfield  {journal} {\bibinfo  {journal} {Phys. Rev. Lett.}\ }\textbf
  {\bibinfo {volume} {107}},\ \bibinfo {pages} {063001} (\bibinfo {year}
  {2011})}\BibitemShut {NoStop}%
\bibitem [{prl()}]{prl_sm_fn}%
  \BibitemOpen
  \href@noop {} {}\bibinfo {note} {As $C>0$ then by definition
  $\boldsymbol{v}^TC\boldsymbol{v} > 0$ for any vector $\boldsymbol{v}$ except
  when $\boldsymbol{v}=0$. We have that
  $\boldsymbol{v}^TBCB^T\boldsymbol{v}=\boldsymbol{w}^TC\boldsymbol{w} >0$
  unless $\boldsymbol{w}=0$, where $\boldsymbol{w}=B^T\boldsymbol{v}$. Hence
  $BCB^T\geq 0$. For general $B$, $BCB^T$ is not guaranteed to be positive
  definite as we can have $\boldsymbol{w}=0$ for $\boldsymbol{v}\neq 0$.
  However, if $B$ is a invertible matrix then $\boldsymbol{w}=0$ only if
  $\boldsymbol{v}= 0$ and so $BCB^T> 0$.}\BibitemShut {Stop}%
\end{thebibliography}%

\appendix

\section{Proof of theorem 1}
In this section we complete the proof of Theorem 1 from the main text. The elements of the proof that we deferred to this appendix are: (A) A proof that any invertible QFIM $\mathcal{F}$ satisfies 
\begin{equation}
[\mathcal{F}^{-1}]_{[kk]} \geq [\mathcal{F}_{[kk]}]^{-1},
\label{eq:qfim-in-}
\end{equation}
where the meaning of this notation (already introduced in the main text) will be clarified below; (B) A proof that, for any QSN with commuting parameter generators, when the initial state of the QSN is the $\varphi$ state, introduced in the main text, the QCRB is saturated by a POVM that can be implemented by independent measurements at each of the sensors; and (C) A proof that for every mixed state of the QSN there exists a pure state of the QSN that, when used as the initial state for the estimation, results in an equal or lower estimation uncertainty $E_{\bs{\Phi}}$ (when the optimal measurement is used) whilst using up the same amount of resources. We now prove (A -- C) in turn, in Propositions 1 -- 3, respectively.

In order to make it clear exactly which sub-matrices of the QFIM and inverse QFIM we are referring to in Eq.~\eqref{eq:qfim-in-}, we begin by first giving a detailed explanation of our sub-vector and sub-matrix notation, that we introduced briefly in the main text.  As in the main text, consider ``partitioning'' the $d$-dimensional vector $\bs{\phi}$ into $m$ sub-vectors, where the $k$\textsuperscript{th} sub-vector has a dimension of $d_k$ and $d=d_1+\dots+d_m$. More specifically, let the 1\textsuperscript{st} sub-vector, denoted $\bs{\phi}_{[1]}$, be given by 
\begin{equation}
\bs{\phi}_{[1]} :=(\phi_{1},\dots,\phi_{d_1})^T,
\end{equation}
 let the 2\textsuperscript{nd} sub-vector be 
 \begin{equation}
 \bs{\phi}_{[2]} :=(\phi_{1+d_1},\dots,\phi_{d_1+d_2})^T,
 \end{equation}
  and so on. Therefore, by denoting $d_{<k} := d_1 + d_2 + \dots + d_{k-1}$,
the $k$\textsuperscript{th} sub-vector is given by
\begin{equation}
 \bs{\phi}_{[k]}:=(\phi_{(1+d_{<k})},\dots,\phi_{(d_k + d_{<k})})^T.
  \end{equation}

Using an analogous notation, for a $d\times d$ matrix $M$ and a given partitioning of $d$ into $d=d_1+\dots+d_m$, we let $M_{[jk]}$ denote the sub-matrix of $M$ obtained by removing the elements that are not both in rows $1+d_{<j}$ to $d_j+d_{<j}$ and columns $1+d_{<k}$ to $d_k+d_{<k}$. Hence, 
\begin{equation}
M =  \begin{pmatrix}
  M_{[11]} & M_{[12]} & \cdots & M_{[1m]} \\
  M_{[21]} & M_{[22]} & \cdots & M_{[2m]} \\
  \vdots  & \vdots  & \ddots & \vdots  \\
  M_{[m1]} & M_{[m2]} & \cdots & M_{[mm]}
 \end{pmatrix}.
\end{equation}
Note that the parentheses in the subscripts of this notation are used to denote that these are sub-vectors and sub-matrices of $\bs{\phi}$ and $M$, respectively, and not just the ordinary scalar elements of $\bs{\phi}$ and $M$. It will be useful to define 
\begin{equation}
\mathbb{P}_j := \{ 1+d_{<j}, 2+d_{<j}, \dots, d_j+d_{<j} \},
\label{eq:pj}
\end{equation}
 i.e, $\mathbb{P}_j$ contains the labels for the parameters in the $j$\textsuperscript{th} partition.

\begin{proposition} For any invertible QFIM $\mathcal{F}$ for a $d$-dimensional vector $\bs{\phi}$, and an arbitrary partitioning of this vector into sub-vectors, $\bs{\phi}_{[1]}$, $\bs{\phi}_{[2]},$ $\dots,$ $\bs{\phi}_{[m]}$,
\begin{equation}
 [\mathcal{F}^{-1}]_{[kk]} \geq \left[\mathcal{F}_{[kk]}\right]^{-1}, 
 \label{eq:matrix-in}
\end{equation}
for all $k=1,2,\dots,m$. Moreover, the equality is obtained for any particular $k$ if and only if $\mathcal{F}_{[jk]}=\mathcal{F}_{[kj]} =0$ for all $j\neq k$. 
\end{proposition}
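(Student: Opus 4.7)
My plan is to reduce the claim to a standard Schur-complement argument. The key observation is that the inequality \eqref{eq:matrix-in} only constrains a single diagonal block of $\mathcal{F}^{-1}$, so I can coarsen the partition to a $2\times 2$ block structure without loss of generality. Specifically, I would introduce a permutation matrix $P$ that reorders the parameters so that the indices in $\mathbb{P}_k$ come first, followed by all remaining indices (in any order). Since both $P\mathcal{F}P^T$ and its inverse $P\mathcal{F}^{-1}P^T$ have their $[kk]$ block in the top-left corner, and since permutation conjugation commutes with inversion, it suffices to prove the claim for the reordered $2\times 2$ block matrix
\begin{equation}
\tilde{\mathcal{F}} = \begin{pmatrix} A & B \\ B^T & C \end{pmatrix}, \qquad A := \mathcal{F}_{[kk]},
\end{equation}
where $B$ collects the off-diagonal blocks $\mathcal{F}_{[kj]}$ for $j\neq k$ and $C$ collects the remaining blocks.

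Next I would invoke the standard block-inverse (Schur complement) formula to write
\begin{equation}
[\tilde{\mathcal{F}}^{-1}]_{[11]} = \bigl(A - B C^{-1} B^T\bigr)^{-1},
\end{equation}
which is valid because $\mathcal{F}>0$ guarantees that both $A$ and $C$ are positive definite (principal submatrices of a positive definite matrix are positive definite) and hence invertible. This is the main computational step, but it is routine once the $2\times 2$ reduction is in place.

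From here the inequality is immediate: $C^{-1}>0$ gives $BC^{-1}B^T \geq 0$, so $A - BC^{-1}B^T \leq A$. Since both sides are positive definite, inversion reverses the order, yielding
\begin{equation}
[\mathcal{F}^{-1}]_{[kk]} = \bigl(A - BC^{-1}B^T\bigr)^{-1} \geq A^{-1} = [\mathcal{F}_{[kk]}]^{-1}.
\end{equation}
For the equality case, I would use that the order-reversing property of matrix inversion on positive definite matrices is \emph{strict} in the sense that $A - BC^{-1}B^T = A$ is necessary and sufficient for saturation. This reduces the question to when $BC^{-1}B^T = 0$, and because $C^{-1}>0$ I can sandwich this quadratic form with $B^T x$ for arbitrary $x$ to conclude $B = 0$, i.e., $\mathcal{F}_{[kj]} = 0$ for all $j\neq k$ (and by symmetry of $\mathcal{F}$, also $\mathcal{F}_{[jk]}=0$).

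The main obstacle, such as it is, is purely notational: keeping straight that the permutation $P$ acts blockwise and therefore preserves the $[kk]$ block structure, so that the $2\times 2$ block reduction is legitimate. Once that bookkeeping is handled, the proof is a one-line application of the Schur complement identity together with the operator-monotonicity of $X\mapsto -X^{-1}$ on positive definite matrices.
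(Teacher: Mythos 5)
Your proposal is correct and follows essentially the same route as the paper: permute the $k$\textsuperscript{th} block to a corner, apply the block-inverse/Schur-complement identity, use $BC^{-1}B^T\geq 0$ together with the order-reversal of inversion on positive definite matrices, and characterize equality via $BC^{-1}B^T=0 \iff B=0$. The only differences are cosmetic (top-left versus bottom-right placement of the block, and a slightly cleaner quadratic-form argument for the equality case than the paper's entry-wise one).
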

To understand this statement and the following proof, it is important to note that for two matrices $A$ and $B$, $A\geq B$ and $A\neq B$ does \emph{not} imply that $A>B$.

\begin{proof} Any QFIM $\mathcal{F}$ is real, symmetric and positive semi-definite \cite{paris2009quantum}, and if it is invertible it is positive definite. As such, we can instead prove this proposition for an arbitrary finite-dimensional $d \times d$ real, symmetric and positive definite matrix, $A$, and an arbitrary partitioning $d=d_1+\dots + d_m$ of this matrix. For any such partitioning consider the $d\times d$ matrix $P_k$ defined by the action on an arbitrary vector $\bs v$:
 \begin{equation}
P_k \begin{pmatrix}
  \bs{v}_{[1]}  \\
  \vdots   \\
  \bs{v}_{[m-2]} \\
    \bs{v}_{[m-1]} \\
      \bs{v}_{[m]}
 \end{pmatrix} =\begin{pmatrix}
  \bs{v}_{[1]}  \\
  \vdots   \\
  \bs{v}_{[m-1]} \\
    \bs{v}_{[m]} \\
      \bs{v}_{[k]}
 \end{pmatrix}.
\end{equation}
 $P_k$ is a permutation matrix and hence $P_kP^T_k=\mathds{1}$. Consider the matrix $\tilde{A}(k)=P_{k} A P_k^T$. This $\tilde{A}(k)$ matrix is symmetric as $A$ is symmetric. For any $s \times s$ matrix, $C$, and $t \times s$ matrix, $B$, then
  \begin{equation} C>0 \implies BCB^T \geq 0, \label{cdsd} \end{equation}
 and if $B$ is a (square) invertible matrix then $BCB^T > 0$ \cite{prl_sm_fn}. Hence $\tilde{A}(k)>0$ because $A>0$ and $P_k$ is invertible. It may be confirmed that
 \begin{equation} \tilde{A}(k)= \begin{pmatrix} A_{[\neq k]} & A_{k}^T \\ A_{k} & A_{[kk]} \end{pmatrix},
 \end{equation}
 where $A_{[\neq k]}$ is a positive definite matrix consisting of those $A_{[mn]}$ matrices with $m \neq k $ and $n \neq k$ (its exact form is irrelevant) and $A_k=(A_{[k1]},A_{[k2]},\dots,A_{[km]})$ where the second label in the subscripts here takes each value sequentially except that it misses out $k$.

 Consider any matrix $M$ that is symmetric, positive definite and has the form
\begin{equation} M= \begin{pmatrix} a & b^T \\ b & c \end{pmatrix},\label{matrix-22}
\end{equation}
 where $a$ and $c$ are square matrices of any sizes and $b$ is of the appropriate dimensions to make this a valid matrix. $M>0$ implies that $a>0$ and $c>0$. The inverse of $M$ exists and is given explicitly by
\begin{equation} M^{-1}= \begin{pmatrix} a^{-1}+a^{-1}b^Tg^{-1}ba^{-1} & -a^{-1}b^Tg^{-1} \\ -g^{-1}ba^{-1} & g^{-1} \end{pmatrix},
\label{apbeq2r}
\end{equation}
 where $g=c-ba^{-1}b^T$. It follows that $ba^{-1}b^T \geq 0$ because $a^{-1}>0$ (see Eq.~(\ref{cdsd})) and therefore $c \geq g$, which implies that $c^{-1} \leq g^{-1}$. 

When $b=0$ (i.e., $M$ is block diagonal) then $c=g$ which implies that $c^{-1}=g^{-1}$. Now, 
\begin{equation}
\left[ba^{-1}b^T\right]_{kk}=\bs{b}(k)^Ta^{-1}\bs{b}(k),
\label{vec-mat-eq}
\end{equation}
 where $b^T=(\bs{b}(1),\bs{b}(2),\dots)$, i.e., we have written $b^T$ as a row vector of column vectors. As $a^{-1}>0$, and via Eq.~(\ref{vec-mat-eq}) and the definition of a positive definite matrix, then if $\bs{b}(k)\neq 0$ it follows that $[ba^{-1}b^T]_{kk}>0$. This implies that $ba^{-1}b^T=0$ only if $b=b^T=0$. Hence, because obviously $c \neq g$ if and only if $ba^{-1}b^T \neq 0$ then $c \neq g$ if and only if $b\neq 0$. Therefore, we have shown that the inverse of the  bottom right diagonal matrix in $M$, $c^{-1}$, is less than or equal to the bottom right diagonal matrix in $M^{-1}$ with the equality obtained only when $M$ is block-diagonal.

Now, by noting that $\tilde{A}(k)$ has been written in the form of the matrix in Eq.~(\ref{matrix-22}), and satisfies the conditions demanded of it ($\tilde{A}(k)>0$), we may then infer that
\begin{equation} 
  [\tilde{A}(k)^{-1}]_{\text{br}} \geq \left[ A_{[kk]} \right]^{-1} , \label{Eq-dsfs}
 \end{equation}
 where $[\tilde{A}(k)^{-1}]_{\text{br}}$ is the $d_k \times d_k$ sub-matrix of $\tilde{A}(k)^{-1}$ in the bottom right corner of $\tilde{A}(k)^{-1}$. Furthermore, the equality only holds when $A_k=0$, implying that $A_{[kj]}=0$ for all $j \neq k$, and as $A$ is symmetric this implies that $A_{[jk]}=0$ for all $j \neq k$. Now $\tilde{A}(k)^{-1}=P_{k} A^{-1} P^T_k$, which implies that $  [\tilde{A}(k)^{-1}]_{\text{br}}=[A^{-1}]_{[kk]} $. Hence, by putting this into Eq.~(\ref{Eq-dsfs}) this leads us to the final conclusion that
   \begin{equation} [A^{-1}]_{[kk]} \geq \left[ A_{[kk]} \right]^{-1} ,
 \end{equation}
with the equality obtained if and only if $A_{[jk]}=A_{[kj]}=0$ for all $j \neq k$. \end{proof}

%===========================================================================%
As in Theorem 1 of the main text, consider a QSN with commuting parameter generators, with the $d$-dimensional vector to be estimated $\bs{\phi} = (\bs{\phi}_{[1]}, \dots, \bs{\phi}_{[s]})$ where $\bs{\phi}_{[k]}$ is a vector containing all of the parameters encoded into sensor $k$. Moreover, as in the main text, consider the sensor-separable initial state of the QSN\begin{equation}
\ket{\varphi} = \bigotimes_{k=1}^s \left(\sum_{\lambda_k} \|\langle \psi | \lambda_k \rangle\| \ket{\lambda_k} \right), 
\label{ap-varphi}
\end{equation}
where $\{\ket{\lambda_k}\}$ is a set of orthonormal mutual eigenstates of the generators for all of the parameters encoded into sensor $k$.

\begin{proposition} For any QSN and state $\varphi$ as described above, there exists a measurement of $\varphi_{\bs{\phi}} = U_{\bs{\phi}} \ket{\varphi}\bra{\varphi}U_{\bs{\phi}}^{\dagger}$ that saturates the QCRB and that can be implemented by local measurements on each of the sensors.
\end{proposition}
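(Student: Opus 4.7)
The plan is to exploit the sensor-separability of $\varphi$ and the block-diagonality of its QFIM to decompose the problem into $s$ independent single-sensor sub-problems, each of which is handled by the commuting-generator saturation result of Matsumoto cited in the main text. First, since $U_{\bs{\phi}} = \bigotimes_k U_k(\bs{\phi}_{[k]})$ is a tensor product and $\ket{\varphi}$ is by construction sensor-separable, the evolved state factorises as $\ket{\varphi_{\bs{\phi}}} = \bigotimes_k \ket{\varphi_k(\bs{\phi}_{[k]})}$, where $\ket{\varphi_k(\bs{\phi}_{[k]})}$ depends only on the parameters encoded into sensor $k$.

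Next I would use the pure-state commuting-generator formula $\mathcal{F}_{kl} = 4(\langle \hat{H}_k\hat{H}_l \rangle_{\varphi} - \langle \hat{H}_k \rangle_{\varphi}\langle \hat{H}_l \rangle_{\varphi})$ to show the QFIM of $\varphi_{\bs{\phi}}$ is block-diagonal with respect to the sensor partition. For any two parameters sitting on distinct sensors $j \neq j'$, the generators act nontrivially on disjoint tensor factors, so the product structure of $\varphi$ forces $\langle \hat{H}_k\hat{H}_l \rangle_{\varphi} = \langle \hat{H}_k \rangle_{\varphi}\langle \hat{H}_l \rangle_{\varphi}$, giving $\mathcal{F}_{kl} = 0$. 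Consequently the $k$th diagonal block $\mathcal{F}_{[kk]}$ coincides with the QFIM of the reduced state $\varphi_k(\bs{\phi}_{[k]})$ on $\mathcal{H}_k$ with respect to $\bs{\phi}_{[k]}$.

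For each sensor $k$, the generators of $\bs{\phi}_{[k]}$ all commute by hypothesis, so by the Matsumoto result there exists a POVM $\mathcal{M}^{(k)} \in \mathscr{M}(\mathcal{H}_k)$ and an unbiased estimator whose classical Fisher information matrix asymptotically equals $\mathcal{F}_{[kk]}$. Defining the global measurement $\mathcal{M}_{\varphi} := \mathcal{M}^{(1)} \otimes \cdots \otimes \mathcal{M}^{(s)}$ gives a POVM that is, by construction, implementable by independent measurements on each sensor. Because $\varphi_{\bs{\phi}}$ is a product state, the outcomes of the $\mathcal{M}^{(k)}$ are statistically independent with factorised distribution $p(\bs{x}\,|\,\bs{\phi}) = \prod_k p_k(\bs{x}_k\,|\,\bs{\phi}_{[k]})$, so the classical Fisher information matrix of $(\varphi, \mathcal{M}_{\varphi})$ is additive across sensors and is itself block-diagonal, with its $k$th block equal to the classical FIM of $\mathcal{M}^{(k)}$ on sensor $k$.

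Putting these pieces together, the classical FIM of $(\varphi, \mathcal{M}_{\varphi})$ and the QFIM of $\varphi_{\bs{\phi}}$ share the same block-diagonal structure, and block-by-block they asymptotically agree. Since inversion of a block-diagonal matrix acts block-by-block, this equality lifts to the inverse, so the QCRB is asymptotically saturated as a matrix inequality and in particular for the weighted scalar $E_{\bs{\Phi}}$. The only step requiring care is the last one: one must check that independent local estimators on each sensor can be combined into a joint unbiased estimator for all of $\bs{\phi}$ whose covariance matches the block-diagonal inverse QFIM, which follows because the local estimators see statistically independent data on disjoint parameter subvectors, so their joint covariance is itself block-diagonal and equal to $\mathcal{F}^{-1}/\mu$ in the $\mu \to \infty$ limit.
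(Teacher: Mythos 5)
Your proposal is correct and follows essentially the same route as the paper's own proof: block-diagonality of the QFIM of the sensor-separable state, local saturation on each sensor via the commuting-generator (Matsumoto) result, and parallel application of the local POVMs. You simply make explicit some steps the paper leaves implicit (the factorisation of the output distribution, additivity of the classical Fisher information, and block-wise inversion), which is a sound elaboration rather than a different argument.
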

\begin{proof}
As noted in the main text, the QFIM for $\varphi$ is block-diagonal. In particular, all of the between-sensor terms are zero. Therefore an estimation procedure using this state can be treated as a collection of independent multi-parameter estimation problems: one at each sensor. Because the parameter generators all commute, there exists a POVM on sensor $k$ that saturates the QCRB for the vector encoded into this sensor (as Eq.~(1) of the main text is satisfied \cite{matsumoto2002new}). This POVM clearly need only act on this sensor (although to implement this POVM, ancillary systems may be required, e.g., a local optical reference beam). As there is an optimal POVM for estimating the parameters encoded at each sensor that is just a local operation on that sensor, these POVMs may all be applied in parallel to the entire sensing network. This is a measurement on the entire network that (a) saturates the QCRB, and (b) requires only local POVMs. 
\end{proof}

In the following proposition, we continue to consider the type of QSN described above. As in the main text, we denote our estimation uncertainty by $E_{\bs{\Phi}}$ (the definition of this quantity is given in the main text).
\begin{proposition} Consider a QSN with commuting generators and the Hilbert space $\mathcal{H}$. For any mixed state $\rho \in \mathscr{D}(\mathcal{H})$, there exists a pure state $\psi \in \mathscr{D}(\mathcal{H})$ such that (1) $\psi$ has an equal or lower QCRB on $E_{\bs{\Phi}}$  than $\rho$, and (2) the resources contained in $\psi$ and $\rho$ are equal.
\end{proposition}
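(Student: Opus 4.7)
The plan is to construct an explicit pure state $\psi \in \mathscr{D}(\mathcal{H})$ directly from $\rho$ by extending the pure-state construction of $\varphi$ in the main text so that it is driven by the diagonal marginal statistics of $\rho$ in the generator eigenbasis on each sensor. Concretely, I would let $\rho_k := \text{Tr}_{\neq k}(\rho)$ denote the reduced state on sensor $k$, choose a mutual orthonormal eigenbasis $\{\ket{\lambda_k}\}$ of all parameter generators on sensor $k$ (which exists because they commute), and set
\begin{equation}
\ket{\psi} := \bigotimes_{k=1}^s \left( \sum_{\lambda_k} \sqrt{\bra{\lambda_k} \rho_k \ket{\lambda_k}} \,\ket{\lambda_k} \right),
\end{equation}
which is pure by construction (and, as a bonus, sensor-separable).

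Resource equality $R(\psi) = R(\rho)$ would follow because $\hat{R}$ commutes with $U_{\bs{\phi}}$ and therefore with every parameter generator, so each $\hat{R}_k$ is diagonal in $\{\ket{\lambda_k}\}$. Since the diagonal probabilities of $\psi$ and $\rho$ in this basis agree by construction, $\langle \hat{R}_k \rangle_\psi = \langle \hat{R}_k \rangle_\rho$ for each $k$, and summing over sensors yields the claim.

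For the QCRB comparison I would use the pure-state identity $\mathcal{F}(\psi)_{kl} = 4(\langle \hat{H}_k \hat{H}_l \rangle_\psi - \langle \hat{H}_k \rangle_\psi \langle \hat{H}_l \rangle_\psi)$. The tensor-product structure of $\psi$ forces all covariances between generators on different sensors to vanish, so $\mathcal{F}(\psi)$ is block-diagonal with respect to the $\bs{\phi}_{[k]}$-partition. For $k,l$ on the same sensor, both generators are diagonal in $\{\ket{\lambda_k}\}$ and their covariance only depends on the marginal diagonal statistics in this basis, which coincide for $\psi$ and $\rho$ by construction; thus $\mathcal{F}(\psi)_{[kk]} = 4\,\text{Cov}_\rho(\hat{H})_{[kk]}$. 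Invoking the standard operator inequality $\mathcal{F}(\rho) \leq 4\,\text{Cov}_\rho(\hat{H})$ valid for commuting generators (the multiparameter generalization of $\mathcal{F} \leq 4\,\text{Var}$, expressing the convexity of QFI under state mixing) then gives $\mathcal{F}(\rho)_{[kk]} \leq \mathcal{F}(\psi)_{[kk]}$, and hence $[\mathcal{F}(\psi)_{[kk]}]^{-1} \leq [\mathcal{F}(\rho)_{[kk]}]^{-1}$.

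The proof would be completed by applying Proposition 1 twice: in general $[\mathcal{F}(\rho)^{-1}]_{[kk]} \geq [\mathcal{F}(\rho)_{[kk]}]^{-1}$, while the block-diagonality of $\mathcal{F}(\psi)$ saturates Proposition 1 to give $[\mathcal{F}(\psi)^{-1}]_{[kk]} = [\mathcal{F}(\psi)_{[kk]}]^{-1}$. Chaining these inequalities yields $[\mathcal{F}(\psi)^{-1}]_{[kk]} \leq [\mathcal{F}(\rho)^{-1}]_{[kk]}$ for every $k$, so each diagonal element of $\mathcal{F}(\psi)^{-1}$ is bounded by that of $\mathcal{F}(\rho)^{-1}$; nonnegativity of the diagonal entries of $W$ then gives the QCRB inequality $E_{\bs{\Phi}}(\psi) \leq E_{\bs{\Phi}}(\rho)$. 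The main obstacle in this plan is justifying the operator inequality $\mathcal{F}(\rho) \leq 4\,\text{Cov}_\rho(\hat{H})$ in the mixed-state, multiparameter setting: it is the one place that genuinely uses the commuting-generators hypothesis and would require either a direct calculation from the symmetric logarithmic derivative or a citation to the multiparameter quantum metrology literature.
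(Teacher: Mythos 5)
Your proposal is correct, and it arrives at the same final state as the paper --- the diagonal weights $\sqrt{\bra{\lambda_k}\rho_k\ket{\lambda_k}}$ you use are exactly the $\|\langle\Psi_\rho|\lambda_k\rangle\|$ that the paper's construction produces when applied to a purification of $\rho$ --- but you reach the QFIM comparison by a genuinely different route. The paper handles the mixed-state case by purifying $\rho$ into $\mathcal{H}\otimes\mathcal{H}_{\mathbb{A}}$, invoking $\mathcal{F}(\rho)\leq\mathcal{F}(\Psi_\rho)$ (discarding the ancilla is one allowed measurement strategy), applying the already-established pure-state argument of Theorem~1 to $\Psi_\rho$, and then dropping the now-decoupled ancilla. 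You instead bypass purification entirely and compare $\mathcal{F}(\rho)$ directly to $4\,\mathrm{Cov}_\rho(\hat{H})$, whose on-sensor blocks your product state reproduces exactly; the rest of your chain (block-diagonality, the two applications of Proposition~1, antimonotonicity of the inverse) is the same machinery the paper uses. What each approach buys: the paper's route needs only facts already proved in the text plus the elementary ``discard the ancilla'' monotonicity, whereas yours needs the additional lemma $\mathcal{F}(\rho)\leq 4\,\mathrm{Cov}_\rho(\hat{H})$ for mixed states --- which you correctly flag as the one unproven step. That lemma is true (it follows from matrix convexity of the QFIM together with concavity of the covariance matrix under mixing, decomposing $\rho$ into pure states; with commuting generators the symmetrized covariance reduces to the ordinary one) and is standard in the multiparameter literature, so a citation or a two-line convexity argument would close your proof. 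In exchange, your version is more self-contained at the level of Hilbert spaces (no ancillas appear), and it makes explicit that only the reduced states $\rho_k$ and their diagonals in the generator eigenbasis matter, which is arguably a cleaner statement of \emph{why} the result holds.
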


\begin{proof} Any density operator $\rho \in \mathscr{D}(\mathcal{H})$ satisfies $\rho = \text{Tr}_{\mathbb{A}} (\ket{\Psi_\rho}\bra{\Psi_\rho})$ from some $\ket{\Psi_{\rho}} \in  \mathcal{H} \otimes  \mathcal{H}_{\mathbb{A}}$ and some ancillary Hilbert space $\mathcal{H}_{\mathbb{A}}$, with $ \mathcal{H}_{\mathbb{A}} = \mathcal{H}$ always sufficient \cite{nielsen2010quantum}. $\ket{\Psi_{\rho}}$ is known as a \emph{purification} of $\rho$. It is clear that $\mathcal{F}(\Psi_{\rho}) \geq \mathcal{F}(\rho)$, as one possible measurement strategy with the pure probe $\ket{\Psi_{\rho}}$ is to discard the ancillary sensor(s), which is entirely equivalent to having the probe state $\rho$.  Now any such purification $\ket{\Psi_{\rho}}$ has an equal or worse QCRB on $E_{\bs{\Phi}}$ than the state $\ket{\varphi'} = \ket{\varphi} \otimes \ket{\psi_{\mathbb{A}}}$, where $\ket{\varphi} $ is the separable state constructed in the proof of Theorem 1 in the main text (see also Eq.~\ref{ap-varphi}) -- the precise form of which will be dependent on $\Psi_{\rho}$ -- and $\ket{\psi_{\mathbb{A}}}$ is any state of the ancillary system(s). Moreover, we may simply drop the ancillary systems in the state $\ket{\varphi'}$, as they do not affect the QFIM of $\ket{\varphi'}$. As such, $\varphi = \ket{\varphi}\bra{\varphi}$ has a smaller QCRB on $E_{\bs{\Phi}}$ than does $\rho$, for any $\rho$ (noting that $\varphi$ depends on $\rho$). Finally, by construction $\rho$ and $\varphi$ contain the same amount of resources. Hence, $\varphi$ satisfies the required conditions of $\psi$ in this proposition.
\end{proof}

%==============================================================================================%
%==============================================================================================%
%==============================================================================================%

\section{Proof of Theorem 2 \label{Sec:proof2}}
In this section, we prove Theorem 2 of the main text. This is restated here for convenience. To be clear, in the following theorem we are consider a general QSN, where (1) the aim to estimate the vector $\bs{\phi} =(\bs{\phi}_{[1]},\dots,\bs{\phi}_{[s]})$, where $\bs{\phi}_{[k]}$ is encoded into the $k^{\rm th}$ sensor; (2) the generators are not assumed to all commute (in contrast to Theorem 1).

\begin{theorem2}
Consider any QSN in which we wish to minimize $E_{\bs{\Phi}}$. For any estimator, probe state $\rho \in \mathscr{D}(\mathcal{H})$ and measurement $\mathcal{M}_{\rho} \in \mathscr{M}(\mathcal{H})$, there exists an estimator, probe $\varphi \in \mathscr{D}(\mathcal{H} \otimes \mathcal{H})$ and measurement $\mathcal{M}_{\varphi} \in \mathscr{M}(\mathcal{H} \otimes \mathcal{H})$ for which 
\begin{enumerate}
\item $\varphi$ is separable between sensors, but each sensors can be entangled with a local ancilla.
\item $R(\varphi) \leq 2R(\rho)$.
 \item $\mathcal{M}_{\varphi}$ is implementable by independent measurements of each sensor.
\item $E_{\bs{\Phi}}(\varphi, \mathcal{M}_{\varphi}) \leq E_{\bs{\Phi}}(\rho, \mathcal{M}_{\rho})$ in the asymptotic $\mu$ limit.
\end{enumerate}
\end{theorem2}

\begin{proof}
The proof is split into three parts: (i) We show that for any $\rho$ we can construct a pure state $\varphi \in \mathscr{D}(\mathcal{H}\otimes\mathcal{H})$ that satisfies condition 1 and that has an equal or better QCRB on $E_{\bs{\Phi}}$ than does $\rho$; (ii) We show that there is a measurement on $\varphi$ that satisfies condition 3 and 4; (iii) We prove that $\varphi$ satisfies condition 2.

\vspace{0.1cm}
\emph{Part (i)} -- For entirely general generators, the elements of the QFIM for a pure probe state are given by \cite{liu2015quantum,liu2014fidelity} 
\begin{equation} 
\mathcal{F}_{mn}(\psi) = 2 \langle  \{\hat{H}_m,\hat{H}_n \} \rangle  -4 \langle \hat{H}_m \rangle \langle \hat{H}_n  \rangle,
\label{eq:QFIM-non-com}
 \end{equation} 
where $\{\cdot,\cdot\}$ is the anti-commutator ($\{A,B\}=AB+BA$). In our QSN problem, the generators of parameters encoded into different sensors must commute. That is, $[ \hat{H}_k,\hat{H}_l] =0$ for $k \in \mathbb{P}_p$ and $l \in \mathbb{P}_q$ with $p \neq q$, where we are using the notation introduced in Eq.~\eqref{eq:pj}. More importantly, this also implies that $\hat{H}_k$ acts non-trivially only in sub-space $\mathcal{H}_l$ if $k\in\mathbb{P}_l$ (where $\mathcal{H}_l$ is the Hilbert space of sensor $l$). 

Now consider any probe state $\rho \in \mathscr{D}(\mathcal{H})$, and any purification of $\rho$ into the Hilbert space  $\mathcal{H} \otimes \mathcal{H}$, which we denote $\psi_{\rho} \in \mathscr{D}(\mathcal{H}\otimes \mathcal{H})$ ($\rho$ can always be purified into this duplicated Hilbert space \cite{nielsen2010quantum}). This purified state must have an optimal estimation uncertainty (i.e., $E_{\bs{\Phi}}$ minimized over all measurements) that is equal to or smaller than that of $\rho$. This is because any measurement strategy for $\rho$ is equivalent to one for $\psi_{\rho}$ where the additional sensors are discarded. 

Denote the QFIM of $\psi_{\rho}$ by $\mathcal{F}$. For a pure state, the $\mathcal{F}_{[ll]}$ sub-matrix of $\mathcal{F}$ depends only on the reduced density operator 
 \begin{equation}
 \rho_{l}=\text{Tr}_{\mathbb{S} \setminus l }(\psi_\rho), 
 \end{equation}
 where $\mathbb{S} \setminus l$ denotes the set of all the sensors except sensor $l$. This follows from Eq.~(\ref{eq:QFIM-non-com}), and by noting that $\hat{H}_k$ acts non-trivially only in the Hilbert space on which $\phi_k$ is encoded. But we can also find a pure state in $\ket{\varphi_l} \in \mathcal{H}_{l} \otimes \mathcal{H}_{l}$ with the same reduced density matrix, $\rho_l$, obtained by tracing over the second ancillary sensor. Therefore the state
 \begin{equation}
 \ket{\varphi} = \ket{\varphi_1} \otimes \ket{\varphi_2} \otimes \dots \otimes | \varphi_{s}\rangle,
 \end{equation}
where $\ket{\varphi_k}$ is a purification of $\rho_{k}$ for $k=1,\dots,s$, has a QFIM $\mathcal{F}'$ with $\mathcal{F}'_{[ll]}=\mathcal{F}_{[ll]}$ for all $l$. Now, as $\varphi = \ket{\varphi}\bra{\varphi}$ contains no entanglement between any two sensors (but note that $\varphi$ does generally contain entanglement between a sensor and its local ancillary duplicate), the off-diagonal sub-matrices of $\mathcal{F}'$ are zero. This implies that
\begin{equation}
 \mathcal{F}'=  \begin{pmatrix}
  \mathcal{F}_{[11]} &0 & \cdots & 0 \\
 0& \mathcal{F}_{[22]} & \cdots & 0 \\
  \vdots  & \vdots  & \ddots & \vdots  \\
  0& 0& \cdots & \mathcal{F}_{[mm]}
 \end{pmatrix}.
\end{equation}
Now, from the inequality in Eq.~(\ref{eq:qfim-in-}), it follows that
\begin{equation}
  \frac{1}{\mu}\text{Tr}( W \mathcal{F}^{-1}) \geq\frac{1}{\mu}\text{Tr} ( W  \mathcal{F}'^{-1}) ,
 \label{Eq:diag-f-bound2}
  \end{equation}
  for any weighting matrix $W$, with the LHS of this inequality the QCRB on $E_{\bs{\Phi}}$ with the probe state $\psi_{\rho}$, and the RHS of this inequality the QCRB on $E_{\bs{\Phi}}$ with the probe state $\varphi$. Hence, $\varphi$ has an equal or lower QCRB bound on the estimation uncertainty than the purified state $\psi_\rho$, for any $\rho$ and any purification. Therefore, $\varphi$ also has a lower QCRB bound on $E_{\bs{\Phi}}$ than that for $\rho$, and note that $\varphi$ satisfies condition 1 of the theorem.

\vspace{0.1cm}
 \emph{Part (ii) --}
Because we do not know that the QCRB can be saturated (for non-commuting generators it often cannot be saturated) simply showing that the QCRB bound on $E_{\bs{\Phi}}$ for the separable state $\varphi$ is smaller or equal to the bound on $E_{\bs{\Phi}}$ for $\rho$, for any $\rho$, is insufficient to show that $\varphi$ necessarily has a better estimation precision than $\rho$ when an optimal measurement is chosen. However, we can confirm this is the case, with the following argument. 

The precision with which $\phi_{[l]}$ can be measured is always improved or unaffected if we know $\phi_{[k]}$ for all $k \neq l$. Both $\psi_{\rho}$ and $\varphi$ have the same QFIM for $\phi_{[l]}$, which is $\mathcal{F}_{[ll]}$, and if all the other parameters are known we may set them to zero (by local known unitaries before the measurement). Therefore, the $\phi_{[l]}$-encoded state in each of these cases is 
\begin{align} 
\ket{\psi_{\rho}^l} &\equiv (\mathds{1}   \otimes \dots \otimes U_l(\phi_{[l]})    \otimes \dots \otimes \mathds{1} ) \ket{\psi_{\rho}}, \\
\ket{\varphi^l} &\equiv    \ket{\varphi_{1}} \otimes \dots \otimes  \ket{\varphi_{l}^l} \otimes \dots \otimes| \varphi_{s}\rangle,
\end{align}
where $\ket{\varphi_{l}^l} \equiv (U_l(\phi_{[l]}) \otimes \mathds{1}) \ket{\varphi_{l}}$. Using only $\phi_{[l]}$-\emph{independent} unitary operations and partial traces (on an extended Hilbert space), we may map $\ket{\varphi^l} \to \ket{\psi_{\rho}^l}$. We relegate a proof of this to the following section. Hence, any POVM on $\ket{\psi_{\rho}^l}$ is exactly equivalent to some POVM on $\ket{\varphi^l}$. This is in the sense that the POVMs have the same number of POVM effects and each measurement outcome, $m$, is associated with the same probability density function, $p(m|\phi_{[l]}$). This implies that $\ket{\varphi^l}$ can estimate $\phi_{[l]}$ with at least as small an estimation uncertainty as can be obtained with $\ket{\psi_{\rho}^l}$, when all the other parameters are known and if the optimal measurement is used. Note that this measurement might not saturate the QCRB bound for $E_{\Phi_{[l]}}$, and when this is the case the optimal measurement will depend on the weighting sub-matrix $W_{[ll]}$.

Return now to the actual problem of interest -- when all of the parameters are \emph{not} known. For the separable state $\varphi$, all of the $\phi_{[l]}$ can be measured simultaneously to the same, or a better, precision that $\psi_{\rho}$ can estimate each $\phi_{[l]}$ when all of the other $\phi_{[k]}$ \emph{are} known. This is because $\varphi$ is separable between any two sensor-and-local-ancilla pairs, and so the optimal POVM for estimating $\phi_{[l]}$ with that state, and given $W_{[ll]}$, need only act on the $l$\textsuperscript{th} duplicate sensor (for exactly the reasons given in the proof of Proposition 2, except that now the QCRB possibly cannot be saturated). Hence, all of the measurements to optimize the estimation precision of each $\phi_{[l]}$ can be implemented in parallel. However, there is no guarantee that $\psi_{\rho}$ can estimate all of the $\phi_{[l]}$ simultaneously with the same estimation uncertainty that each one can be estimated with when all of the other parameters are known. 

Hence, we have shown that we can map any density operator $\rho$ to a pure state $\varphi$ that is separable between sensors (but may have entanglement between a sensor and a local ancilla), and that, for some POVM that can be implemented with independent local measurements, the estimation uncertainty obtained with $\varphi$ is equal or lower than that obtained with $\rho$, for any measurement on $\rho$. Thus, although the QCRB cannot necessarily be saturated, a separable state allows us to get as close as it is possible to saturating it. As such, we have now found a state and measurement for which conditions 1, 3 and 4 hold.

\vspace{0.1cm}   
\emph{Part (iii) --}
Finally, we need to show that the resources consumed by $\varphi$ are at most twice those in $\rho$. In order to assess the resources contained in a state of $s$ sensors and $s$ ancillas it is necessary to define how to count resources in the ancillas. The natural ``worst case'' extension of the resource operator $\hat{R}= \hat{R}_1 + \hat{R}_2 + \cdots + \hat{R}_{s}$ (see main text) to a larger sensors-and-ancillas Hilbert space is of the form $\hat{R} \to \hat{R}'$ with
\begin{equation} 
\hat{R}'= (\hat{R}_1 +\hat{R}_1') + (\hat{R}_2 +\hat{R}_2') +\dots + (\hat{R}_{s} +\hat{R}_{s}'),
\end{equation}
where $\hat{R}_k'$ is the same as the operator $\hat{R}_k$ except that it instead acts non-trivially only on the ancillary sensor local to sensor $k$ (whereas $\hat{R}_k$ acts non-trivially only on sensor $k$). This is because with this choice for $\hat{R}'$ we are counting resources in the ancillas on an equal footing to resources consumed by the sensors. Once this is understood, it is immediately clear that $\varphi$ need contain no more than twice the amount of resources as $\rho$. The exact amount depends on the chosen purification to construct $\varphi$, and could be considerably less than this, but it need never be greater than this.
\end{proof}

Note that in the proof of Theorem 2, we took the most conservative approach to resource counting with ancillas that seems physically sensible. An alternative well-motivated choice for the resource operators is to take the resource operator on each ancilla to be the operator that maps all vectors to zero. This is the relevant choice when all properties of the ancillary systems are irrelevant from the perspective of resource counting. This is arguably the most appropriate method for counting resources when the parameters are induced by some fragile sample (relevant optical sensing examples include measurements of spin ensembles \cite{wolfgramm2013entanglement}, biological systems \cite{carlton2010fast,taylor2013biological}, atoms \cite{tey2008strong,eckert2008quantum} and single molecules \cite{pototschnig2011controlling}). In this case, it is essential to minimize the disturbance of the sample, and as any ancillary systems do not interact with the sample there is no need to minimize any property (e.g., energy) local to that part of the state. In this setting, our argument implies that there is no fundamental improvement gained from using sensor-entangled states, as we noted in the main text.

\section{Equivalent POVMs}\label{app:POVMs}
In this final section, we confirm the claim made in the proof of Theorem 2: using only $\phi_{[l]}$-\emph{independent} unitary operations and partial traces (on an extended Hilbert space), we may map $\ket{\varphi^l} \to \ket{\psi_{\rho}^l}$. The precise forms for these states were given in the proof of Theorem 2, and will be repeated later in this section. To prove this claim we show something more general, and then show how it applies in this particular case.

Consider a density operator, $\rho$, on some Hilbert space, $\mathcal{H}$, with dimension $q$. Now consider any purification of $\rho$ into $\mathcal{H} \otimes \mathcal{H} $, denoted $\ket{\Psi_1}$, and another purification of $\rho$ into a Hilbert space $\ket{\Psi_2} \in \mathcal{H} \otimes \mathcal{H}'$, where $\mathcal{H}'$ is of dimension $q'\geq q$. Consider the states obtained by enacting the local unitary $u$ on the `original' Hilbert space, i.e., the states 
\begin{align}
\ket{\Psi_1(u)} &= (u \otimes \mathds{1}_{q} ) \ket{\Psi_1},\\
\ket{\Psi_2(u)} &= (u \otimes \mathds{1}_{q'}) \ket{\Psi_2}.
\end{align}
Via only $u$-independent unitary transformations and partial traces, we may map $\ket{\Psi_1(u)} \otimes \ket{\text{fid}'} \to \ket{\Psi_2(u)} $, where $\ket{\text{fid}'}$ is some fiducial state in $\mathcal{H}'$. 

\begin{proof}
 It is always possible to express $\ket{\Psi_1(u)}$ as
\begin{align} 
\ket{\Psi_1(u)} = \sum_{k=1}^q \alpha_k  \ket{\gamma_k^u} \otimes \ket{\varphi_k}, 
\end{align}
where the $\ket{\gamma_k^u}$ and $\ket{\varphi_k}$ states form orthonormal bases for $\mathcal{H}$, and only the $\ket{\gamma_k^u}$ depend on $u$. Because $\ket{\Psi_2(u)}$ is also a purification of $\rho$ it must be possible to express it in the similar form
\begin{align} 
 \ket{\Psi_2(u)} = \sum_{k=1}^q \alpha_k  \ket{\gamma_k^u} \otimes \ket{\vartheta_k}, 
\end{align}
where the $\ket{\vartheta_k}$ are $q$ states from an orthonormal basis of $\mathcal{H}'$ (that is, $\ket{\vartheta_k}$ for $k=1,\dots,q'$ is an orthonormal basis for $\mathcal{H}'$).

Now consider any unitaries, $U_k'$, such that $U_k' \ket{\text{fid}'} =  \ket{\vartheta_k}$ for $k=1,\dots,q$ (note that this relation does not fully define any of the unitaries). Using any such unitaries, we may construct the unitary
\begin{align} 
 \Lambda_A  =\mathds{1} \otimes  \sum_{k=1}^q \ket{\varphi_k} \bra{\varphi_k} \otimes U_k' ,
\end{align}
which acts on $\mathcal{H}_{T} = \mathcal{H} \otimes \mathcal{H} \otimes \mathcal{H}'$. For any such $\Lambda_A$ it follows that
\begin{align} 
 \Lambda_A( \ket{\Psi_1(u)} \otimes\ket{\text{fid}'}) = \sum_{k=1}^q\alpha_k  \ket{\gamma_k^u} \otimes \ket{\varphi_k} \otimes \ket{\vartheta_k}.   
\end{align}
In essentially the same fashion we have that
\begin{align*} 
\Lambda_B\Lambda_A( \ket{\Psi_1(u)} \otimes\ket{\text{fid}'}) = \sum_{k=1}^q \alpha_k  \ket{\gamma_k^u} \otimes \ket{\text{fid}} \otimes \ket{\vartheta_k},
\end{align*}
where $ \Lambda_B$ is a unitary on $\mathcal{H}_{T}$ defined by
\begin{align} 
 \Lambda_B  =\mathds{1} \otimes   \sum_{k=1}^{q'} U_k^{\dagger} \otimes \ket{\vartheta_k} \bra{\vartheta_k} , 
\end{align}
where $U_k$ are any unitaries with the action $U_k \ket{\text{fid}} = \ket{\varphi_k}$ for $k=1,\dots,q$, where $ \ket{\text{fid}}$ is some fixed state in $\mathcal{H}$, and $U_k$ may have any action for $k=q+1,\dots, q'$. Therefore, denoting
\begin{align} 
\ket{\xi(u)} = \Lambda_B\Lambda_A (\ket{\Psi_1(u)} \otimes \ket{\text{fid}'} ),
\end{align}
we have that 
\begin{align} 
 \ket{\Psi_2(u)} \bra{\Psi_2(u)} = \text{Tr}_{2} \left(\ket{\xi(u)} \bra{\xi(u)}\right),
 \end{align}
where the trace operation is over the second Hilbert space in $\mathcal{H}_{T}=\mathcal{H} \otimes \mathcal{H} \otimes \mathcal{H}'$. Hence, we can map $\ket{\Psi_1(u)}$ to $\ket{\Psi_2(u)}$ using only $u$-independent unitary transformations and a partial trace.
\end{proof}

As stated at the beginning of this section, in the proof of Theorem 2 we considered the two $\phi_{[l]}$-encoded states
\begin{align} 
\ket{\psi_{\rho}^l} &=(\mathds{1}   \otimes \dots \otimes U_l(\phi_{[l]})    \otimes \dots \otimes \mathds{1} ) \ket{\psi_{\rho}}, \\
\ket{\varphi^l} &=    \ket{\varphi_{1}} \otimes \dots \otimes  \ket{\varphi_{l}^l} \otimes \dots \otimes \ket{\varphi_{s}},
\end{align}
where $\ket{\varphi_{l}^l}=(U_l(\phi_{[l]}) \otimes \mathds{1}) \ket{\varphi_{l}}$, and we claimed that using only $\phi_{[l]}$-independent unitary operations and partial traces (on an extended Hilbert space) we may map $\ket{\varphi^l} \to \ket{\psi_{\rho}^l}$. We may clearly map $\ket{\varphi^l}\to \ket{\varphi_{l}^l} $ using a partial trace, so we only need to show that the result above implies that we may  $\ket{\varphi_{l}^l} \to \ket{\psi_{\rho}^l}$ using only $\phi_{[l]}$-independent unitary operations and partial traces. Both $\ket{\varphi_l}$ and $\ket{\psi_{\rho}}$ are purifications of the same density operator $ \rho_{l}=\text{Tr}_{\mathbb{S} \setminus l }(\rho)$ on sensor $l$. In particular, $\ket{\varphi_l}$ is a purification into the doubled Hilbert space and $\ket{\psi_{\rho}}$ a purification into a larger Hilbert space. Furthermore, $\ket{\varphi_{l}^l} $ and $\ket{\psi_{\rho}^l}$ are simply $\ket{\varphi_l}$ and $\ket{\psi_{\rho}}$, respectively, evolved by some $\phi_{[l]}$-dependent unitary that is local to the `original' Hilbert space. As such, it is clear that our derivation above implies that there is a mapping $\ket{\varphi_{l}^l} \to \ket{\psi_{\rho}^l}$ which uses only $\phi_{[l]}$-independent unitary operations and partial traces (on an extended Hilbert space).

\end{document}